\newtheorem{lemma}{Lemma}
\newcommand{\myparagraph}[1]{\noindent{\bf #1}}
\begin{document}

\date{}

\newcommand{\name}{StrongChain\xspace}%
\title{\Large \bf \name: Transparent and Collaborative Proof-of-Work Consensus}

\author{
    {\rm Pawel Szalachowski$^1$ \hspace{0.7cm} Dani\"el Reijsbergen$^1$
    \hspace{0.7cm} Ivan Homoliak$^1$ \hspace{0.7cm} Siwei
    Sun$^{2,}$\thanks{This work was done while the author was at SUTD.}}\\ 
    $^1$Singapore University of Technology and Design (SUTD)\\
    $^2$Institute of Information Engineering and DCS Center, Chinese Academy of Sciences
}

\maketitle

\thispagestyle{empty}

\subsection*{Abstract}
Bitcoin is the most successful cryptocurrency so far.  This is mainly due to its
novel consensus algorithm, which is based on proof-of-work combined with a
cryptographically-protected data structure and a rewarding scheme that
incentivizes nodes to participate.  However, despite its unprecedented success
Bitcoin suffers from many inefficiencies. For instance, Bitcoin's consensus
mechanism has been proved to be incentive-incompatible, its high reward variance
causes centralization, and its hardcoded deflation raises questions about its
long-term sustainability.  

In this work, we revise the Bitcoin consensus mechanism by proposing \name, a
scheme that introduces transparency and incentivizes participants to collaborate
rather than to compete.  The core design of our protocol is to reflect and
utilize the computing power aggregated on the blockchain which is invisible and
``wasted'' in Bitcoin today.  Introducing relatively easy, although important
changes to Bitcoin's design enables us to improve many crucial aspects of
Bitcoin-like cryptocurrencies making it more secure, efficient, and profitable
for participants.  We thoroughly analyze our approach and we present an
implementation of \name.  The obtained results confirm its efficiency, security,
and deployability.

\section{Introduction}
\label{sec:intro}
One of the main novelties of Bitcoin~\cite{nakamoto2008bitcoin} is Nakamoto
consensus. This mechanism enabled the development of a permissionless, anonymous, and
Internet-scale consensus protocol,  and combined with incentive mechanisms
allowed Bitcoin to emerge as the first decentralized cryptocurrency.  Bitcoin is
successful beyond all expectations, has inspired many other projects, and has
started new research directions.  Nakamoto consensus is based on
proof-of-work (PoW)~\cite{dwork1992pricing} in order to mitigate Sybil
attacks~\cite{douceur2002sybil}. To prevent modifications, a
cryptographically-protected append-only list~\cite{bayer1993improving} is
introduced. This list consists of  transactions grouped into blocks and is
usually referred to as a \textit{blockchain}.  Every active protocol
participant (called a \textit{miner}) collects transactions sent by users and
tries to solve a computationally-hard puzzle in order to be able to write to the
blockchain (the process of solving the puzzle is called \textit{mining}).  When
a valid solution is found, it is disseminated along with the transactions that
the miner wishes to append.  Other miners verify this data and, if valid, append
it to their replicated blockchains.  The miner that has found a solution is
awarded by a) the system, via a rewarding scheme programmed into the protocol,
and b) fees paid by transaction senders. All monetary transfers
in Bitcoin are expressed in its native currency (called bitcoin, abbreviated as
BTC) whose supply is limited by the protocol.

Bitcoin has started an advent of decentralized cryptocurrency systems and as the
first proposed and deployed system in this class is surprisingly robust.
However, there are multiple drawbacks of Bitcoin that undermine its security
promises and raise questions about its future.  Bitcoin has been proved to be
incentive-incompatible~\cite{eyal2015miner,sapirshtein2016optimal,eyal2014majority,gapgame}.
Namely, in some circumstances, the miners' best strategy is to not announce their
found solutions immediately, but instead withhold them for some time period.
Another issue is that the increasing popularity of the system tends towards its
centralization.  Strong competition between miners resulted in a high reward variance,
thus to stabilize their revenue miners started grouping their computing power
by forming \textit{mining pools}.  Over time, mining pools have come to dominate the
computing power of the system, and although they are beneficial for miners,
large mining pools are risky for the system as they have multiple ways of
abusing the
protocol~\cite{karame2012double,eyal2015miner,eyal2014majority,sapirshtein2016optimal}.
Recently, researchers rigorously analyzed one of the impacts of Bitcoin's
deflation~\cite{longrunBitcoin,instability_noReward,gapgame}. Their results
indicate that Bitcoin may be unsustainable in the long term, mainly due to
decreasing miners' rewards that will eventually stop completely. Besides that,
unusually for a transaction system, Bitcoin is designed to favor availability
over consistency. This choice was motivated by its open and permissionless
spirit, but in the case of inconsistencies (i.e., \textit{forks} in the
blockchain) the system can be slow to converge.

Motivated by these drawbacks, we propose \name, a simple yet powerful revision
of the Bitcoin consensus mechanism.  Our main intuition is to design
a system such that the mining process is more transparent and collaborative,
i.e., miners get better knowledge about the mining power of the system and they are incentivized to solve puzzles together rather than compete.  In order to achieve
it, in the heart of the \name's design we employ \textit{weak solutions}, i.e.,
puzzle solutions with a PoW that is significant yet insufficient for a standard
solution.  We design our system, such that a) weak solutions are part of the
consensus protocol, b) their finders are rewarded independently, and c) miners
have incentives to announce own solutions and append solutions of others
immediately.  We thoroughly analyze our approach and show that with these
changes, the mining process is becoming more transparent, collaborative, secure,
efficient, and decentralized.  Surprisingly, we also show how our approach can
improve the freshness properties offered by Bitcoin.  We present an
implementation and evaluation of our scheme.

\section{Background and Problem Definition}
\subsection{Nakamoto Consensus and Bitcoin}
\label{sec:pre}
\label{sec:pre:bitcoin}
The Nakamoto consensus protocol allows decentralized and distributed network
comprised of mutually distrusting participants to reach an agreement on the
state of the global distributed ledger~\cite{nakamoto2008bitcoin}.  The
distributed ledger can be regarded as a linked list of blocks, referred to as
the {\it blockchain}, which serializes and confirms ``transactions''.  To
resolve any \textit{forks} of the blockchain the protocol specifies to always
accept the longest chain as the current one.  
Bitcoin is a peer-to-peer cryptocurrency that deploys Nakamoto consensus as
its core mechanism to avoid double-spending. 
Transactions spending bitcoins are announced to
the Bitcoin network, where miners validate, serialize all non-included
transactions, and try to create (mine) a block of transactions with a PoW
embedded into the block header.  
A valid block must fulfill the condition that for a cryptographic hash function
$H$, the hash value of the block header is less than the target $T$.

Brute-forcing the nonce (together with some other changeable data fields) is
virtually the only way to produce the PoW, which costs computational resources
of the miners.
To incentivize miners, the Bitcoin protocol allows the miner who finds a block
to insert a special transaction (see below) minting a specified amount of new
bitcoins and collecting transaction fees offered by the included transactions,
which are transferred to an account chosen by the miner.
Currently, every block mints 12.5 new bitcoins. This amount is halved 
every four years, upper-bounding the number of bitcoins that will be created to
a fixed total of 21 million coins. It implies that after around the year 2140, 
no new coins will be created, and the transaction fees will be the only source of
reward for miners. Because of its design, Bitcoin is a deflationary currency. 

The overall hash rate of the Bitcoin network  and the difficulty of the PoW
determine how long it takes to generate a new block for the whole network (the
block interval).  To stabilize the block interval at about 10 minutes for
the constantly changing total mining power, the Bitcoin network adjusts the target $T$ every
$2016$ blocks (about two weeks, i.e., a \textit{difficulty window}) according to
the following formula
\begin{equation}\label{eqn:adjust_Ts}
T_{new} = T_{old} 
\cdot \frac{\textit{Time~of~the~last~2016~blocks}}{\textit{2016}\cdot\textit{10~minutes}}.
\end{equation}
In simple terms, the difficulty increases if the network is finding blocks
faster than every 10 minutes, and decrease otherwise.  With dynamic difficulty, Nakamoto's longest chain rule was considered as a
bug,\footnote{\url{https://goo.gl/thhusi}} as it is trivial to produce long chains
that have low difficulty. The rule was replaced by the strongest-PoW chain rule
where competing chains are measured in terms of PoW they aggregated.  As long as
there is one chain with the highest PoW, this chain is chosen as the current
one.

Bitcoin introduced and uses the \textit{unspent transaction output} model.  The
validity of a Bitcoin transaction is verified by executing a script proving that
the transaction sender is authorized to redeem unspent coins.
The only exception is the first transaction in the transaction list
of a block, which implements how the newly minted bitcoins and transaction fees
are distributed.  It is called a \textit{coinbase transaction} and it contains the amount
of bitcoins (the sum of newly minted coins and the fees derived from all the
transactions) and the beneficiary (typically the creator of the block).  Also,
the Bitcoin scripting language offers a mechanism (\texttt{OP\_RETURN}) for
recording data on the blockchain, which facilitates third-party applications
built-on Bitcoin. 

Bitcoin proposes the simplified payment verification (SPV) protocol, that allows
resource-limited clients to verify that a transaction is indeed included in a
block provided only with the block header and a short transaction's inclusion
proof. The key advantage of the protocol is that SPV clients can verify the
existence of a transaction without downloading or storing the whole block.  SPV
clients are provided only with block headers and on-demand request from the
network inclusion proofs of the transactions they are interested in.
%
%
%
%

In the original white paper, Nakamoto heuristically argues that the consensus
protocol remains secure as long as a majority ($>50\%$) of the participants'
computing power honestly follow the rule specified by the protocol, which is
compatible with their own economic incentives.

\label{sec:problem}
\subsection{Bitcoin Mining Issues}
\label{sec:problem:issues}
Despite its popularity, Nakamoto consensus and Bitcoin suffer from multiple
issues.  Bitcoin mining is not always incentive-compatible. By deviating from the
protocol and strategically withholding found blocks, a miner in possession of a
proportion $\alpha$ of the total computational power may occupy more than
$\alpha$ portion of the blocks on the blockchain, and therefore gain
disproportionally higher payoffs with respect to her
share~\cite{bahack2013theoretical,eyal2014majority,sapirshtein2016optimal}.
More specifically, an attacker tries to create a private chain by keeping found
blocks secret as long as the chain is in an advantageous position with one or
more blocks more than the public branch.  She releases her private chain only when
the public chain has almost caught up, hence invalidating the public branch and
all the efforts made by the honest miners.   This kind of attack, called
\emph{selfish mining}, can be more efficient when a well-connected selfish miner's
computational power exceeds a certain threshold (around more than 30\%).
Thus, selfish mining does not pay off if the
mining power is sufficiently decentralized.

Unfortunately, the miners have an impulse to centralize their computing
resources due to Bitcoin's rewarding scheme.  In Bitcoin, rewarding is a
zero-sum game and only the lucky miner who manages to get her block accepted
receives the reward, while others who indeed contributed computational resources
to produce the PoW are completely invisible and ignored.  Increasing mining
competition leads to an extremely high variance of the payoffs of a miner with a
limited computational power. A solo miner may need to wait months or years to
receive any reward at all.  As a consequence, miners are motivated to group
their resources and form mining pools, that divide work among pool participants
and share the rewards according to their contributions.  As of November 2018,
only five largest pools account for more than 65\% of the mining power of the
whole Bitcoin
network.\footnote{\url{https://btc.com/stats/pool?pool_mode=month}} Such mining
pools not only undermine the decentralization property of the system but also
raise various in-pool or cross-pool security
issues~\cite{rosenfeld2011analysis,courtois2014subversive,eyal2015miner,luu2015power}.

Another seemingly harmless characteristic of Bitcoin is its finite monetary
supply.  However, researchers in their recent
work~\cite{longrunBitcoin,instability_noReward,gapgame} investigate the system
dynamics when incentives coming from transaction fees are non-negligible
compared with block rewards (in one extreme case the incentives come only from
fees). They provide analysis and evidence, indicating an undesired system
degradation due to the rational and self-interested participants.  Firstly, such
a system incentivizes large miner coalitions, increasing the system centralization
even more. Secondly, it leads to a mining gap where miners would avoid mining
when the available fees are insufficient. Even worse, rational miners tend to
mine on chains that do not include available transactions (and their fees),
rather than following the block selection rule specified by the protocol,
resulting in a backlog of transactions. Finally, in the sole transaction fee
regime, selfish mining attacks are efficient for miners with arbitrarily low
mining power, regardless of their network connection qualities.  These results
suggest that making the block reward permanent and accepting the monetary
inflation may be a wise design choice to ensure the stability of the
cryptocurrency in the long run.  

Moreover, the chain selection rule (i.e., the strongest chain is accepted),
together with the network delay, occasionally lead to forks, where two or more
blocks pointing to the same block are created around the same time, causing the
participants to have different views of the current system state. Such
conflicting views will eventually be resolved since with a high probability one
branch will finally beat the others (then the blocks from the ``losing'' chain
become {\it stale blocks}).  The process of fork resolution is quite slow, as
blocks have the same PoW weight and they arrive in 10-minutes intervals (on
average).

Finally, the freshness properties provided by Bitcoin are questionable.  By
design, the Bitcoin blockchain preserves the order of blocks and transactions,
however, the accurate estimation of time of these events is
challenging~\cite{SzCVCBT18}, despite the fact that each block has an associated timestamp.  A block's timestamp is accepted if a) it is greater than the median
timestamp of the previous eleven blocks, and b) it is less than the network
time plus two hours.\footnote{\url{https://en.bitcoin.it/wiki/Block_timestamp}}
This gives significant room for manipulation ---
in
theory, a timestamp can differ in hours from the actual time since it is largely determined
by a single block creator.  In fact, as time cannot be accurately determined
from the timestamps, the capabilities of the Bitcoin protocol as a
timestamping service are limited, which may lead to severe attacks
by itself~\cite{gervais2015tampering,culubas}.

\subsection{Requirements} \label{sec:requirements}
For the purpose of revising a consensus protocol of PoW blockchains in a secure,
well-incentivized, and seamless way, we define the following respective requirements:

\begin{compactitem}		
    \item \textbf{Security} -- the scheme should improve the security of
        Nakamoto consensus by mitigating known attack vectors and preventing new
        ones. In essence, the scheme should be incentive-compatible, such that
        miners benefit from following the consensus rules and have no gain from
        violating them.

    \item \textbf{Reward Variance} -- another objective is to minimize the
        variance in rewards. This requirement is crucial for decentralization since a high
        reward variance is the main motivation of individual miners to join centralized
        mining pools. Centralization is undesirable as large-enough
        mining pools can attack the Bitcoin protocol. 

    \item \textbf{Chain Quality} -- the scheme should provide a high chain
        quality, which usually is described using the two following properties.
        \begin{compactitem}
            \item \textbf{Mining Power Utilization} -- the ratio between the
                mining power on the main chain and the mining power of the
                entire blockchain network.  This property describes the performance of
                mining and its ideal value is 1, which denotes that all mining
                power of the system contributes to the ``official'' or ``canonical'' chain.  A
                high mining power utilization implies a low stale block rate.
            \item \textbf{Fairness} -- the protocol should be fair, i.e., a
                miner should earn rewards proportionally to the resources
                invested by her in mining. We denote a miner with $\alpha$ of
                the global mining power as an \textit{$\alpha$-strong miner}.
        \end{compactitem}

    \item \textbf{Efficiency and Practicality} -- the scheme should not
        introduce any significant computational, storage, or bandwidth
        overheads.  This is especially important since Bitcoin works as a
        replicated state machine, therefore all full nodes replicate data and
        the validation process.  In particular, the block validation time, its
        size, and overheads of SPV clients should be at least similar as today.
        Moreover, the protocol should not introduce any assumptions that would
        be misaligned with Bitcoin's spirit and perceived as unacceptable by
        the community. In particular, the scheme should not introduce any
        trusted parties and should not assume strong synchronization  of nodes
        (like global and reliable timestamps).
\end{compactitem}

\label{sec:details}
\section{High-level Overview}
\subsection{Design Rationale}
Our first observation is that Bitcoin mining is not transparent. It is difficult
to quickly estimate the computing power of the different participants, because the only indicator
is the found blocks. After all, blocks arrive with a low frequency, and each block is
equal in terms of its implied computational power.  Consequently, the only
way of resolving forks is to wait for a stronger chain to emerge, which can be a time-consuming process.  A related issue is block-withholding-like
attacks (e.g., selfish mining) which are based on the observation that sometimes it is
profitable for an attacker to deviate from the protocol by postponing
the announcement of new solutions.  We see transparency as a helpful property
also in this context.  Ideally, non-visible (hidden) solutions should be
penalized, however, in practice it is challenging to detect and prove that a
solution was hidden.  We observe that an alternative way of mitigating these
attacks would be to promote visible solutions, such that with more computing
power aggregated around them they get stronger.  This would incentivize miners
to publish their solutions immediately, since keeping it secret may be too risky
as other miners could strengthen a competing potential (future) solution over
time.  Finally, supported by recent research
results~\cite{longrunBitcoin,instability_noReward,gapgame,eyal2014majority,sapirshtein2016optimal},
we envision that redesigning the Bitcoin reward scheme is unavoidable to keep the
system sustainable and more secure.  Beside the deflation issues (see
\autoref{sec:problem:issues}), the reward scheme in Bitcoin is a zero-sum game
rewarding only lucky miners and ignoring all effort of other participants.  That
causes fierce competition between miners and a high reward variance, which stimulates
miners to collaborate, but within mining pools, introducing more risk to the
system.  We aim to design a system where miners can benefit from
collaboration but without introducing centralization risks. 

\subsection{Overview}
Motivated by these observations, we see weak puzzle solutions, currently
invisible and ``wasted'' in Bitcoin,  as a promising direction.  Miners
exchanging them could make the protocol more transparent as announcing them
could reflect the current distribution of computational efforts on the network.
Furthermore, if included in consensus rules, they could give blocks a better
granularity in terms of PoW, and incentivize miners to collaborate.  In our
scheme, miners solve a puzzle as today but in addition to publishing solutions,
they exchange weak solutions too (i.e., almost-solved puzzles). The lucky miner
publishes her solution that embeds gathered weak solutions (pointing to the same
previous block) of other miners.  Such a published block better reflects the aggregated
PoW of a block, which in the case of a fork can indicate that more mining power
is focused on a given branch (i.e., actually it proves that more computing power
``believes'' that the given branch is correct).  Another crucial change is to
redesign the Bitcoin reward system, such that the finders of weak solutions are also
rewarded. Following lessons learned from mining pool attacks, instead of sharing
rewards among miners, our scheme rewards weak solutions proportionally to their
PoW contributed to a given block and all rewards are independent of other
solutions of the block.
(Note, that this change requires a Bitcoin \textit{hard fork}.)

There are a few intuitions behind these design choices.  First, a selfish miner
finding a new block takes a high risk by keeping this block secret.  This is 
because blocks 
have a better granularity due to honest miners exchanging partial solutions and strengthening their
prospective block, which in the case of a fork would be stronger than the older
block kept secret (i.e., the block of the selfish miner).  Secondly, miners are
actually incentivized to collaborate by a) exchanging their weak solutions, and
b) by appending weak solutions submitted by other miners.  For the former case,
miners are rewarded whenever their solutions are appended, hence keeping them
secret can be unprofitable for them.  For the latter case, a miner appending
weak solutions of others only increases the strength of her potential block, and
moreover, appending these solutions does not negatively influence  the miner's
potential reward.  Finally, our approach comes with another benefit.
Proportional rewarding of weak solutions decreases the reward variance, thus
miners do not have to join large mining pools in order to stabilize their
revenue. This could lead to a higher decentralization of mining power on the
network.

In the following sections, we describe details of our system, show its analysis,
and report on its implementation.

\section{\name Details}

\begin{algorithm}[t!]
    \caption{Pseudocode of \name functions.}
	\label{alg:all}

    \small

    \SetKwProg{func}{function}{}{}

    \func{mineBlock()}{
        $\mathit{weakHdrsTmp\leftarrow\emptyset}$\;
        \For{$nonce \in \{0,1,2,...\}$}{
            $hdr\leftarrow\mathit{createHeader(nonce)}$\;
            /* check if the header meets the strong target */\\
            $h_{tmp}\leftarrow H(hdr)$\;
            \If{$h_{tmp} < T_s$}{
                $\mathit{B\leftarrow createBlock(hdr,weakHdrsTmp,Txs)}$\;
                $\mathit{broadcast(B)}$\;
                \Return; /* signal to mine with the new block */\\
    		}
            /* check if the header meets the weak target */\\
            \If{$h_{tmp} < T_w$}{
                $\mathit{weakHdrsTmp.add(hdr)}$\;
                $\mathit{broadcast(hdr)}$\;
    		}
    	}
    }

    \func{onRecvWeakHdr($\mathit{hdr})$}{
        $h_w\leftarrow H(hdr)$\;
        \textbf{assert}($T_s \leq h_w<T_w$ \textbf{and} \textit{validHeader(hdr)})\;
        \textbf{assert}($\mathit{hdr.PrevHash ==  H(lastBlock.hdr)}$) \;
        $\mathit{weakHdrsTmp.add(hdr)}$\;
    }

    \func{rewardBlock($B$)}{
        /* reward block finder with $R$ */\\
        reward($B.hdr.Coinbase, R + B.TxFees$)\; 
        $w\leftarrow \gamma*T_s/T_w$; /* reward weak headers proportionally */\\
        \For{$\mathit{hdr \in B.weakHdrSet}$}{
            reward($hdr.Coinbase, w * c * R$)\;
        }
    }

   \func{validateBlock($B$)}{
       \textbf{assert}($H(B.hdr)<T_s$ \textbf{and} \textit{validHeader(B.hdr)})\;
       \textbf{assert}($\mathit{B.hdr.PrevHash ==  H(lastBlock.hdr)}$) \;
       \textbf{assert}($\mathit{validTransactions(B)}$)\;
       \For{$\mathit{hdr \in B.weakHdrSet}$}{
           \textbf{assert}($T_s \leq H(hdr) <T_w$ \textbf{and} \textit{validHeader(hdr)})\;
            \textbf{assert}($\mathit{hdr.PrevHash ==  H(lastBlock.hdr)}$)\;
       }
   }

    \func{chainPoW($chain$)}{
        $sum\leftarrow 0$\;
        \For{$B \in chain$}{
            /* for each block compute its aggregated PoW */\\
            $\mathit{T_s \leftarrow B.hdr.Target}$\;
            $sum\leftarrow sum + T_{max}/T_{s}$\;
            \For{$\mathit{hdr \in B.weakHdrSet}$}{
                $sum\leftarrow sum + T_{max}/T_w$\;
            }
        }
        \Return $sum$\;
    }

    \func{getTimestamp($B$)}{
        $\mathit{sumT\leftarrow B.hdr.Timestamp}$\;
        $\mathit{sumW\leftarrow 1.0}$\;
        /* average timestamp by the aggregated PoW */\\
        $w\leftarrow T_{s}/T_w$\;
        \For{$\mathit{hdr \in B.weakHdrSet}$}{
            $sumT\leftarrow sumT + w*hdr.Timestamp$\;
            $sumW\leftarrow sumW + w$\;
        }
        \Return $sumT/sumW$\;
    }
\end{algorithm}

\subsection{Mining}
As in Bitcoin, in \name miners authenticate transactions by collecting them into
blocks whose headers are protected by a certain amount of PoW.  A simplified
description of a block mining procedure in \name is presented as the
\textit{mineBlock()} function in \autoref{alg:all}.  Namely, every miner tries
to solve a PoW puzzle by computing the hash function over a newly created
header.  The header is constantly being changed by modifying its nonce
field,\footnote{In fact, other fields can be modified too if needed.} until a
valid hash value is found.  Whenever a miner finds a header $hdr$ whose hash
value $h=H(hdr)$ is smaller than the \textit{strong target} $T_s$, i.e., a $h$
that satisfies the following:
\begin{equation*}
    h < T_s,
\end{equation*}
then the corresponding block is announced to the network and becomes, with all
its transactions and metadata, part of the blockchain.  We refer to headers of
included blocks as \textit{strong headers}.

One of the main differences with Bitcoin is that our mining protocol handles also
headers whose hash values do not meet the strong target $T_s$, but still are low
enough to prove a significant PoW. We call such a header a \textit{weak header}
and its hash value $h$ has to satisfy the following:
\begin{equation}
    \label{eq:weak}
    T_s \leq h < T_w,
\end{equation}
where $T_w > T_s$ and $T_w$ is called the \textit{weak target}.

Whenever a miner finds such a block header, she adds it to her local list of
weak headers (i.e., \textit{weakHdrsTmp}) and she propagates the header among
all miners.  Then every miner that receives this information first validates it
(see \textit{onRecvWeakHdr()}) by checking whether
\begin{compactitem}
    \item the header points to the last strong header,
    \item its other fields are correct (see \autoref{sec:details:block:layout_valid}),
    \item and \autoref{eq:weak} is satisfied.
\end{compactitem}
Afterward, miners append the header to their lists of weak headers.
We do not limit the number of weak headers appended, although this number is
correlated with the $T_w/T_s$ ratio (see \autoref{sec:analysis}).

Finally, miners continue the mining process in order to find a strong header.
In this process, a miner keeps creating candidate headers by computing hash values
and checking whether the strong target is met.  Every candidate header ``protects''
all collected weak headers (note that all of these weak headers point to the same previous
strong header).

In order to keep the number of found weak headers close to a constant value,
\name adjusts the difficulty $T_w$ of weak headers every 2016 blocks immediately
following the adjustment of the difficulty $T_s$ of the strong headers according
to \autoref{eqn:adjust_Ts}, such that the ratio $T_w / T_s$ is kept at a
constant (we discuss its value in \autoref{sec:analysis}).

\subsection{Block Layout and Validation}
\label{sec:details:block:layout_valid}
A block in our scheme consists of transactions, a list of weak headers, and a
strong header that authenticates these transactions and weak headers.  Strong
and weak headers in our system inherit the fields from Bitcoin headers and
additionally enrich it by a new field.  A block header consists of the following
fields:
\begin{compactdesc}
	\item[$\mathit{PrevHash}$\textnormal{:}] is a hash of the previous block header, 
	\item[$\mathit{Target}$\textnormal{:}] is the value encoding the current target defining
        the difficulty of finding new blocks,
	\item[$\mathit{Nonce}$\textnormal{:}] is a nonce, used to generate PoW,
	\item[$\mathit{Timestamp}$\textnormal{:}] is a Unix timestamp,
    \item[$\mathit{TxRoot}$\textnormal{:}] is the root of the Merkle tree~\cite{Merkle1988}
        aggregating all transactions of the block, and
    \item[$\mathit{Coinbase}$\textnormal{:}] represents an address of the miner that will
        receive a reward. 
\end{compactdesc}
As our protocol rewards finders of weak headers (see details in
\autoref{sec:details:payments}), every weak header has to be accompanied with the
information necessary to identify its finder.  Otherwise, a finder of a strong
block could maliciously claim that some (or all) weak headers were found by her
and get rewards for them.  For this purpose and for efficiency, we introduced a
new 20B-long header field named \textit{Coinbase}.  With the
introduction of this field, \name headers are 100B long.  But on the other
hand, there is no longer any need for Bitcoin coinbase transactions (see
\autoref{sec:pre:bitcoin}), as all rewards are determined from headers.

In our scheme, weak headers are exchanged among nodes as part of a block, hence
it is necessary to protect the integrity of all weak headers associated with the
block. 
To realize it, we introduce a
special transaction, called a \textit{binding transaction}, which contains a hash
value computed over the weak headers.  This transaction is the first transaction
of each block and it protects the collected weak headers. 
Whenever a strong header is found, it is announced
together with all its transactions and collected weak headers, therefore, this
field  protects all associated weak headers. To encode this field we utilize
the \texttt{OP\_RETURN} operation as follows:
\begin{equation}
    \label{eq:weak_hdr_hash}
    \texttt{OP\_RETURN}\quad \mathit{H(hdr_0 \| hdr_1 \| ... \| hdr_n)},
\end{equation}
where $hdr_i$ is a weak header pointing to the previous strong header.  
Since  weak headers have redundant fields (the \mbox{\textit{PrevHash}}, \textit{Target}, and
\textit{Version} fields have the same values as the strong header), 
we propose to save bandwidth and storage by not including these fields into the data of a block. 
This modification reduces the size of a weak header from 100B to 60B only, which is especially important for SPV clients who keep downloading new block headers.

With our approach, a newly mined and announced block can encompass
multiple weak headers.  Weak headers, in contrast to strong headers, are not used to
authenticate transactions, and they are even stored and exchanged \textit{without} their
corresponding transactions. Instead, the main purpose of including weak headers
it to contribute and reflect the aggregated mining power concentrated on a given
branch of the blockchain.  We present a fragment of a blockchain of \name in
\autoref{fig:block}.  As depicted in the figure, each block contains a single strong
header, transactions, and a set of weak headers aggregated via a binding
transaction.

\begin{figure}[t!]
  \centering
  \includegraphics[width=\columnwidth]{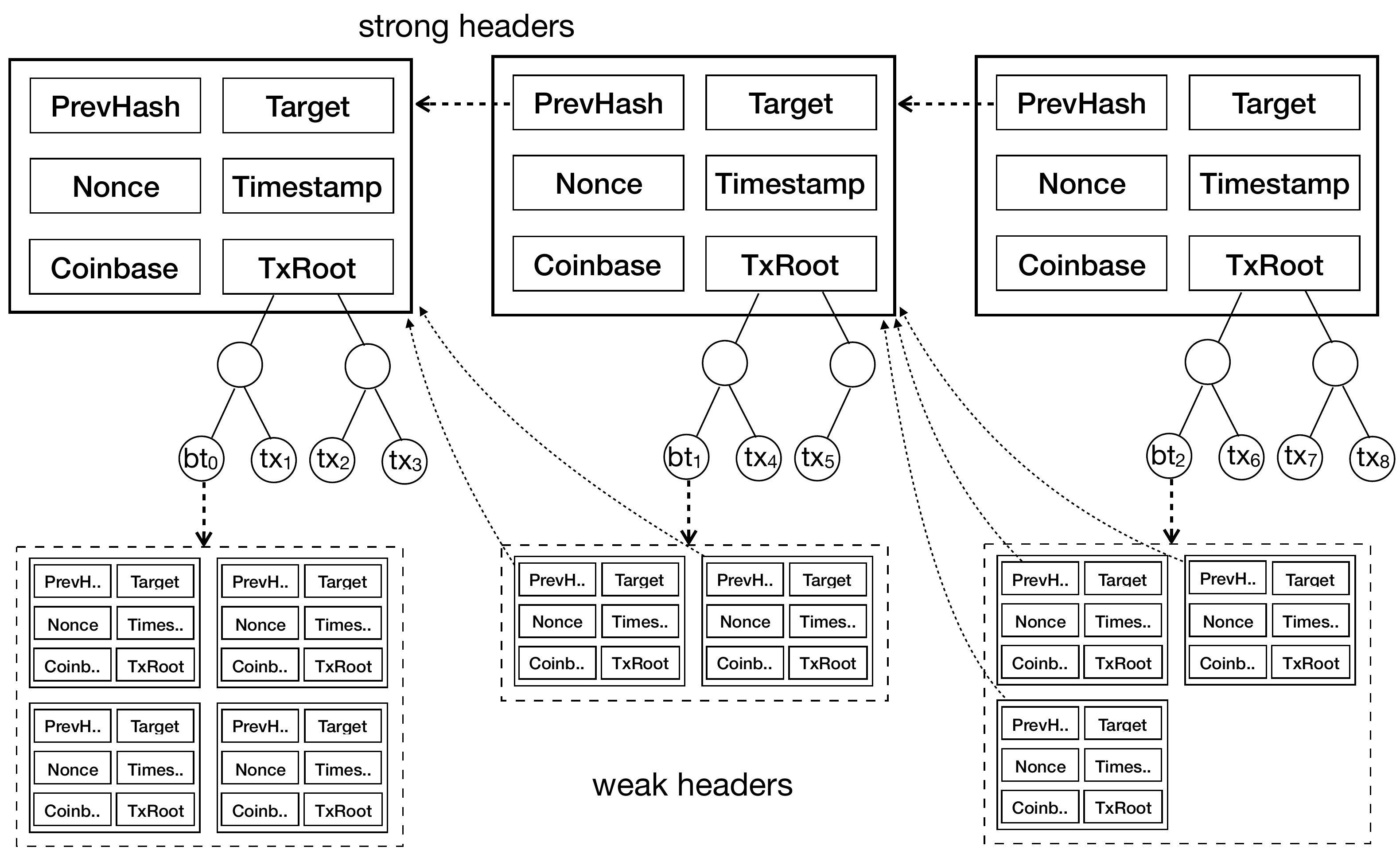}
    \caption{An example of a blockchain fragment with strong headers, weak
    headers, and binding and regular transactions.}
  \label{fig:block}
\end{figure}

On receiving a new block, miners validate the block by checking the following
(see \textit{validateBlock()} in \autoref{alg:all}):
\begin{compactenum}
    \item The strong header is protected by the PoW and points to the previous
        strong header.
    \item Header fields have correct values (i.e., the version, target, and
        timestamp are set correctly).
    \item All included transactions are correct and protected by the strong
        header. This check also includes checking that all weak headers
        collected are protected by a binding transaction included in the block.

    \item All included weak headers are correct:
        a) they meet the targets as specified in \autoref{eq:weak},
        b) their \textit{PrevHash} fields point to the
                previous strong header,
        and c) their version, targets, and timestamps have correct values.
\end{compactenum}
If the validation is successful, the block is accepted as part of the
blockchain.

\subsection{Forks}
\label{sec:details:forks}
One of the main advantages of our approach is that blocks reflect their aggregated mining
power  more precisely.  Each block beside its strong header contains
multiple weak headers that contribute to the block's PoW.  In the case of a
fork, our scheme relies on the strongest chain rule, however, the PoW is
computed differently than in Bitcoin.  For every chain its PoW is calculated as
presented by the \textit{chainPoW()} procedure in \autoref{alg:all}.  Every
chain is parsed and for each of its blocks the PoW is calculated by adding:
\begin{compactenum}
    \item the PoW of the strong header, computed as $T_{max}/T_s$, where $T_{max}$ is the maximum target value, and
    \item the accumulated PoW of all associated weak headers,
        counting each weak header equally as $T_{max}/T_w$.
\end{compactenum}
Then the chain's PoW is expressed as just the sum of all its blocks' PoW. Such
an aggregated chain's PoW is compared with the competing chain(s). The chain with
the largest aggregated PoW is determined as the current one.  As difficulty in
our protocol changes over time, the strong target $T_s$ and PoW of weak headers
are relative to the maximum target value $T_{max}$.  We assume that nodes of the
network check whether every difficulty window is computed correctly (we skipped
this check in our algorithms for easy description).

Including and empowering weak headers in our protocol moves away from
Bitcoin's ``binary'' granularity and gives blocks better expression of the PoW
they convey. An example is presented in \autoref{fig:chain}. For instance, nodes
having the blocks $B_i$ and $B'_i$ can immediately decide to follow the
block $B_i$ as it has more weak headers associated, thus it has accumulated more
PoW than the block $B'_i$.

An exception to this rule is when miners solve conflicts.  Namely, on
receiving a new block, miners run the algorithm as presented, however, they also
take into consideration PoW contributions of known weak headers that
point to the last blocks. For instance, for a one-block-long fork within the
same difficulty window, if a block $B$ includes $l$ weak headers and a miner
knows of $k$ weak headers pointing to $B$, then that miner will select $B$ over any competing block $B'$ that includes $l'$ weak and has $k'$ known weak headers pointing to it if $l+k>l'+k'$.  Note that this rule incentivizes miners to propagate
their solutions as quickly as possible as competing blocks become ``stronger'' over
time.

\begin{figure}[t!]
  \centering
  \includegraphics[width=.9\linewidth]{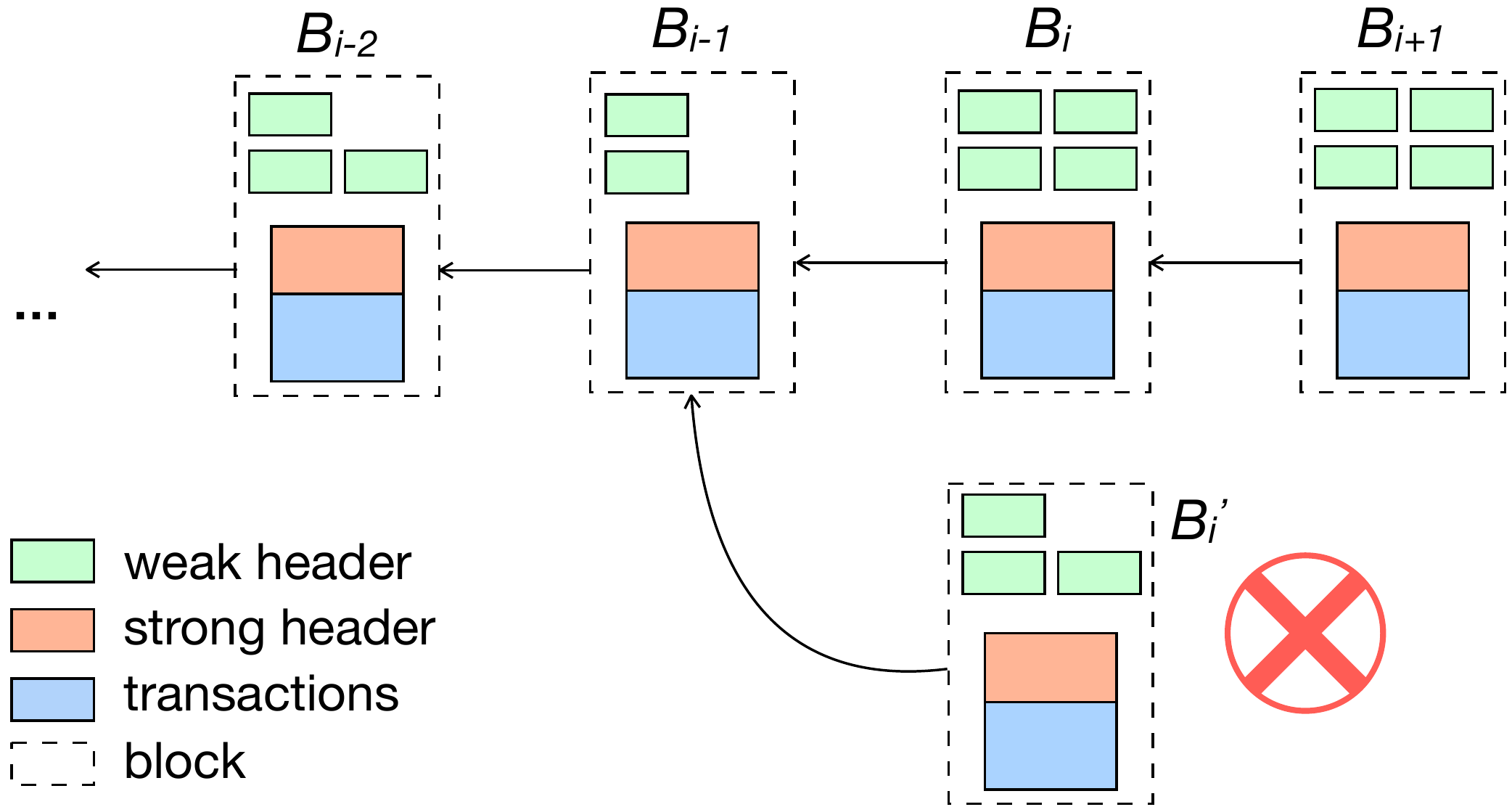}
    \caption{An example of a forked blockchain in \name.}
  \label{fig:chain}
\end{figure}

\subsection{Rewarding Scheme}
\label{sec:details:payments}
The rewards distribution is another crucial aspect of \name and it is presented
by the \textit{rewardBlock()} procedure from \autoref{alg:all}.  The miner that
found the strong header receives the full reward $R$. Moreover,  in contrast to
Bitcoin, where only the ``lucky'' miner is paid the full reward, in our scheme
all miners that have contributed to the block's PoW (i.e., whose weak headers
are included) are paid by commensurate rewards to the provided PoW.    A weak
header finder receive a fraction of $R$, i.e., $\gamma*c*R*T_s/T_w$, as a reward
for its corresponding solution contributing to the total PoW of a particular
branch, where the $\gamma$ parameter influences the relative impact of weak header rewards and $c$ is just a scaling constant (we discuss
their potential values and implications in \autoref{sec:analysis}).  Moreover, we
do not limit weak header rewards and miners can get multiple rewards for their
weak headers within a single block.  Similar reward mechanisms are present in
today's mining pools (see \autoref{sec:related}), but unlike them, weak header
rewards in \name are independent of each other.  Therefore, the reward scheme is
not a zero-sum game and miners cannot increase their own rewards by dropping
weak headers of others (actually, as we discuss in \autoref{sec:analysis}, they
can only lose since their potential solutions would have less PoW without
others' weak headers).  Furthermore, weak header rewards decrease significantly
the mining variance as miners can get steady revenue, making the system more
decentralized and collaborative.

As mentioned before, the number of weak headers of a block is unlimited, they
are rewarded independently (i.e., do not share any reward), and all block
rewards in our system are proportional to the PoW contributed.  In such a
setting, a mechanism incentivizing miners to terminate a block creation is
needed (without such a mechanism, miners could keep creating huge blocks with weak
headers only).  In order to achieve this,  \name always attributes block
transaction fees ($B.TxFees$)  to the finder of the strong header (who also
receives the full reward $R$).

Note that in our rewarding scheme, the amount of newly minted coins is always at
least $R$, and consequently, unlike Bitcoin or Ethereum~\cite{wood2014ethereum},
the total supply of the currency in our protocol is not upper-bounded. This
design decision is made in accordance with recent results on the long-term
instability of deflationary
cryptocurrencies~\cite{longrunBitcoin,instability_noReward,gapgame}.

\subsection{Timestamps}
\label{sec:details:timestamps}
In \name, we follow the Bitcoin rules on constraining timestamps (see
\autoref{sec:pre:bitcoin}), however, we redefine how block timestamps are
interpreted.  Instead of solely relying on a timestamp put by the miner who
mined the block, block timestamps in our system are derived from the strong
header and all weak headers included in the corresponding block. The algorithm
to derive a block's timestamp is presented as \textit{getTimestamp()} in
\autoref{alg:all}.  A block's timestamp is determined as a weighted average
timestamp over the strong header's timestamp and all timestamps of the weak
headers included in the block.  The strong header's timestamp has a weight of
$1$, while weights of weak header timestamps are determined as their PoW
contributed (namely, a weak header's timestamp has a weight of the ratio between
the strong target and the weak target).  Therefore, the timestamp value is
adjusted proportionally to the mining power associated with a given block.  That
change reflects an average time of the block creation and mitigates miners that
intentionally or misconfigured put incorrect timestamps into the blockchain. We
show the effectiveness of this approach in \autoref{sec:analysis:ts}.

\subsection{SPV Clients}
Our protocol supports light SPV clients. With every new block, an SPV client is
updated with the following information:
\begin{equation}
    \mathit{hdr, hdr_0, hdr_1, ..., hdr_n, BTproof},
\end{equation}
where \textit{hdr} is a strong header, $\mathit{hdr_i}$ are associated weak headers, and \textit{BTproof} is an inclusion proof of a binding transaction
that contains a hash over the weak headers (see \autoref{eq:weak_hdr_hash}).
Note that headers contain redundant fields, thus as described in
\autoref{sec:details:block:layout_valid}, they can be provided to SPV clients
efficiently.

With this data, the client verifies fields of all headers, computes the PoW of
the block (analogous, as in \textit{chainPoW()} from \autoref{alg:all}), and
validates the \textit{BTproof} proof to check whether all weak headers are
correct, and whether the transaction is part of the blockchain (the proof is
validated against \textit{TxRoot} of \textit{hdr}).  Afterward, the client saves the strong header
$\textit{hdr}$ and its computed PoW, while other messages (the weak headers and
the proof) can be dropped.

\section{Analysis}
\label{sec:analysis}
In this section, we evaluate the requirements discussed in \autoref{sec:requirements}.
We start with analyzing \name's efficiency and practicality. Next, we study how
our design helps with reward variance, chain quality, and security.

%

\subsection{Efficiency and Practicality}

For the efficiency, it is important to consider the main source of additional load on the bandwidth, storage, and processing power of the nodes: the weak headers. 
Hence, in the following section we analyze the probability distribution of the number of weak headers. Next, we discuss the value of the impact of the parametrization on the average block rewards.

\subsubsection{Number of Weak Headers}
\newcommand{\totalhp}{\eta}

In Bitcoin, we assume that hashes are drawn randomly between 0 and $T_{max} = 2^{256}-1$. Hence, a single hash being smaller than $T_w$ is a \emph{Bernoulli trial} with parameter \mbox{$p_w = T_w/2^{256}$}. The number of hashes tried until a weak 
header is found is therefore \emph{geometrically} distributed, and the time in 
seconds between two weak headers is approximately \emph{exponentially} distributed 
with rate $\totalhp p_w$, where $\totalhp$ is the total hash rate per second and $p_w$ is chosen such that $\totalhp p_w \approx 1/600$. When a weak header is found, it is also a strong block 
with probability $p_s/p_w$ (where $p_s = T_s/2^{256}$), which is again a Bernoulli trial. Hence, the probability distribution of the number of weak headers 
found between two strong blocks is that of the number of trials before the first 
successful trial --- as such, it also follows a geometric distribution, but
with mean $p_w / p_s - 1$.\footnote{Another way to reach this conclusion is as follows: 
the number of weak headers found in a fixed time interval is Poisson distributed, 
and it can be shown that the number of Poisson arrivals in an interval with 
exponentially distributed length is geometrically distributed.} For example, 
for $T_w / T_s = 2^{10}$ this means that the average number of weak headers 
per block equals 1023. With 60 bytes per weak header (see \autoref{sec:details:block:layout_valid}) and 1MB per Bitcoin block, this would mean 
that the load increases by little over 6\% on average with a small computational
overhead introduced (see details in
\autoref{sec:implementation}). The probability of 
having more than 16667 headers (or 1MB) in a block would equal.\footnotemark
\footnotetext{For an actual block implementation, we advice to introduce
separate spaces for weak headers and transactions. With such a design, miners do
not have incentives and trade-offs between including more transactions instead
of weak headers.}
$$
\left(1 - \frac{p_s}{p_w}\right)^{16668} = \left(1-2^{-10}\right)^{16668} \approx 8.4603 \cdot 10^{-8}.
$$
Since around 51,000 Bitcoin blocks are found per year, this is expected to happen roughly once every 230 years.

\subsubsection{Total Rewards}\label{sec:scaling}


To ease the comparison to the Bitcoin protocol, 
we can enforce the same average mining reward per block (currently 12.5 BTC).
Let $R$ denote Bitcoin's mining reward. 
Since we reward weak headers as well as strong blocks, we need to scale all mining 
rewards by a constant $c$ to ensure that the total reward remains unchanged --- this is done in the \textit{rewardBlock} function in \autoref{alg:all}.
As argued previously, we reward all weak headers equally by $\gamma R  T_s/T_w $. Since the average number of weak headers per strong block is \mbox{$T_w/T_s - 1$}, this means that the expected total reward per block (i.e., strong block and weak header rewards) equals
$cR + cR\gamma T_s/T_w \cdot (T_w/T_s - 1)$.
Hence, we find that
$$
c = \frac{1}{1+\gamma(T_w/T_s - 1) T_s/T_w},
$$
which for large values of $T_w/T_s$ is close to $1/(1+\gamma)$. This means that if $\gamma = 1$, the strong block and weak header rewards contribute almost equally to a miner's total reward.


\renewcommand{\P}{\mathbb{P}}
\newcommand{\E}{\mathbb{E}}
\newcommand{\N}{\mathbb{N}}
\newcommand{\var}{\textnormal{Var}}
\renewcommand{\i}{{\bf 1}}

\subsection{Reward Variance of Solo Mining}
The tendency towards centralization in Bitcoin caused by powerful mining pools
can largely be attributed to the high reward variance of solo mining~\cite{rosenfeld2011analysis,gencer2018decentralization}.
Therefore, keeping the reward variance of a solo miner at a low level is a central design goal.

Let $\mathbf{R}^{BC}$ and $\mathbf{R}^{SC}$ be the random variables representing the per-block
rewards for an $\alpha$-strong solo miner 
in Bitcoin and in \name, respectively.
For any given strong block in both protocols, we define the random variable $\mathbf{I}$ as follows:
\begin{equation*}
\mathbf{I} = \begin{cases}
1 &\text{the block is mined by the solo miner,}\\
0 &\text{otherwise.}
\end{cases}
\end{equation*}
By definition, $\mathbf{I}$ has a Bernoulli distribution, which means that $\E(\mathbf{I}) = \alpha$ and $\var(\mathbf{I}) = \alpha(1 - \alpha)$,
where $\E$ and $\var$ are the mean and variance of a random variable respectively. The following technical lemma will aid our analysis of the reward variances of solo miners:
\begin{lemma} \label{lm:randomsum}
Let $X_1,X_2,\ldots$ be independent and identically distributed random variables. Let N be defined on $\{0,1,\ldots\}$ and independent of $X_1,X_2,\ldots$. Let $N$ and all $X_i$ have finite mean and variance. Then
\begin{equation*}
\var\left(\sum_{i=1}^N X_i \right) = \E(N)\var(X) + \var(N)(\E(X))^2. \label{eq:randomsum}
\end{equation*}
\label{lm:randomsum}
\end{lemma}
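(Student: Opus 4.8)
The plan is to prove this via the \emph{law of total variance} (the conditional variance decomposition), conditioning on $N$. Writing $S = \sum_{i=1}^N X_i$, the identity states that
$$\var(S) = \E\bigl[\var(S \mid N)\bigr] + \var\bigl(\E[S \mid N]\bigr),$$
so the entire computation reduces to evaluating the two conditional quantities $\E[S \mid N]$ and $\var(S \mid N)$ and then applying an outer expectation and variance over $N$.

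First I would compute the conditional expectation and variance. Conditioned on the event $\{N = n\}$, the sum $S$ is a sum of exactly $n$ iid copies of $X$; because $N$ is independent of the $X_i$, conditioning on $N = n$ does not alter the distribution of each $X_i$. Hence $\E[S \mid N = n] = n\,\E(X)$ and, using independence of the summands, $\var(S \mid N = n) = n\,\var(X)$. As functions of the random variable $N$ these read $\E[S \mid N] = N\,\E(X)$ and $\var(S \mid N) = N\,\var(X)$.

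Next I would substitute these into the decomposition. The first term becomes $\E[N\,\var(X)] = \E(N)\,\var(X)$, pulling out the constant $\var(X)$. The second term becomes $\var(N\,\E(X)) = (\E(X))^2\,\var(N)$, since scaling a random variable by the constant $\E(X)$ multiplies its variance by $(\E(X))^2$. Summing the two terms yields exactly $\E(N)\var(X) + \var(N)(\E(X))^2$, as claimed.

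The routine-looking but genuinely load-bearing step is the independence of $N$ from the sequence $X_1, X_2, \ldots$: it is precisely this hypothesis that lets me treat the summands inside $\{N = n\}$ as unconditioned iid variables, so that the familiar formulas $\E\bigl[\sum_{i=1}^n X_i\bigr] = n\,\E(X)$ and $\var\bigl(\sum_{i=1}^n X_i\bigr) = n\,\var(X)$ apply verbatim. The finite-mean-and-variance assumptions on both $N$ and the $X_i$ guarantee that every expectation above is well defined and finite, so the law of total variance is legitimately applicable; I would flag these as the conditions to verify rather than as a source of real difficulty. As an alternative that avoids conditioning, one could expand $\E(S^2)$ directly using indicator variables $\i\{i \le N\}$ and carefully separate the diagonal from the off-diagonal terms, but the conditioning route is cleaner and makes the role of the independence hypothesis transparent.
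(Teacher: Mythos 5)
Your proposal is correct and complete. Note, however, that the paper does not actually supply an argument for this lemma --- its ``proof'' consists entirely of a citation to an external reference --- so there is no in-paper derivation to compare against. Your route via the law of total variance, $\mathbb{E}[\operatorname{Var}(S \mid N)] + \operatorname{Var}(\mathbb{E}[S \mid N])$ with $\mathbb{E}[S \mid N] = N\,\mathbb{E}(X)$ and $\operatorname{Var}(S \mid N) = N\,\operatorname{Var}(X)$, is the standard textbook proof of this identity, and you correctly identify the independence of $N$ from the $X_i$ as the hypothesis that makes the conditional computations legitimate, as well as the finiteness assumptions that make the decomposition applicable. This is exactly the argument one would expect the cited reference to contain, so your writeup in effect fills a gap the paper leaves to the literature.
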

\begin{proof}
See \cite{randomsums}.
\end{proof}
\paragraph{Reward Variance of Solo Mining in Bitcoin.}
Bitcoin rewards the miner of 
a block creator with the fixed block reward $R$ and the variable (total)
mining fees, which we denote by the random variable $\textbf{F}$. Therefore, we have
\begin{equation*}
\mathbf{R}^{BC} = \mathbf{I}(R + \textbf{F}),
\end{equation*}
which implies that
\begin{equation}\label{eq:bitcoinreward1}
\begin{split}
\var(\mathbf{R}^{BC}) & = R^2 \var(\mathbf{I}) +  \var(\mathbf{I} \textbf{F}).
\end{split}
\end{equation}
Since $ \mathbf{I} \textbf{F} = \sum_{i=0}^{\mathbf{I}} \textbf{F}$, we can use Lemma~\ref{lm:randomsum}
(substituting $\textbf{I}$ for $N$ and $\textbf{F}$ for $X$) to obtain
\begin{equation}\label{eq:bitcoinreward2}
\begin{split}
\var\left(\mathbf{I} \textbf{F} \right) &= \E(\mathbf{I})\var(\textbf{F}) + \var(\mathbf{I})\E^2(\textbf{F}).
\end{split}
\end{equation}
Combining \eqref{eq:bitcoinreward1} and \eqref{eq:bitcoinreward2} gives
\begin{equation}\label{eq:bitcoinreward3}
\begin{split}
\var(\mathbf{R}^{BC}) & = \E(\mathbf{I})\var(\textbf{F}) + \var(\mathbf{I}) \left( \E^2(\textbf{F}) + R^2\right) \\
& = \alpha \var(\textbf{F}) +\alpha(1-\alpha) \left(\E^2(\textbf{F}) + R^2\right). \\
\end{split}
\end{equation}
When the fees are small compared to the mining reward, 
this simplifies to $\alpha(1-\alpha)R^2$. By comparison, in \cite{rosenfeld2011analysis} the variance of the block rewards (without fees) earned by a solo miner across a time period of $t$ seconds is studied, and found to equal $\alpha R^2 t / 600$.\footnote{In particular, it is found to be $htR^2/(2^{32}D)$, where $h = \alpha \eta$ and $\eta / (2^{32} D) \approx 1/600$.} The same quantity can be obtained by using \eqref{eq:bitcoinreward3}, Lemma~\ref{lm:randomsum}, and the total number of strong blocks found (by any miner) after $t$ seconds of mining (which has a Poisson distribution with mean $t/600$). 
\paragraph{Reward Variance of Solo Mining in StrongChain.}
For $\mathbf{R}^{SC}$, we assume that the solo miner has $\mathbf{N}$
weak headers included in the strong block, and that she obtains $c \gamma R T_s/T_w$ reward per weak header.
Then the variance equals
$$
\mathbf{R}^{SC} = \mathbf{I}(cR + \mathbf{F}) + c \gamma R T_s/T_w \mathbf{N},
$$
where $c$ is the scaling constant derived in \autoref{sec:scaling}.
Hence, by applying Lemma~\ref{lm:randomsum},
we compute the variance of $\mathbf{R}^{SC}$ as
\begin{equation}
\begin{split}
\var(\mathbf{R}^{SC}) = & \; (cR)^2\var(\mathbf{I}) + \var(\mathbf{I} \mathbf{F}) \\  & \; + (c\gamma RT_s/T_w)^2\var(\mathbf{N}).
\end{split}
\label{eq:threeterms}
\end{equation}
The first term, which represents 
the variance of the strong block rewards, is similar to Bitcoin but 
multiplied by $c^2$. If we choose $T_w/T_s = 1024$ and $\gamma = 10$ (this choice is motivated later in this section), 
$c^2$ roughly equals $0.0083$, which is quite small. Hence, the strong block rewards 
have a much smaller impact on the reward variance in our setting than in Bitcoin. 
The second term, which represents the variance of the fees, 
is precisely the same as for Bitcoin. 
The third term represents the variance of the weak header rewards, which in turn completely depends on $\var(\mathbf{N})$. 

To evaluate $\var(\mathbf{N})$, we again use Lemma~\ref{lm:randomsum}: let, for any weak header, $\mathbf{J}$ equal $1$ if it is found by the solo miner, and $0$ otherwise. Also, let $\mathbf{L}$ be the total number of weak headers found in the block, so including those not found by the solo miner. Then $\mathbf{N}$ is the sum of $\mathbf{L}$ instances of $\mathbf{J}$, where $\mathbf{J}$ has a Bernoulli distribution with success probability $\alpha$ (and therefore $\E(\mathbf{J}) = \alpha$ and $\var(\mathbf{J}) = \alpha(1-\alpha)$), and $\mathbf{L}$ has a geometric distribution with success probability $T_s/T_w$ (and therefore $\E(\mathbf{L}) = T_w/T_s-1$ and $\var(\mathbf{L}) = (T_w/T_s)^2 - T_w/T_s$. By substituting this into \eqref{eq:threeterms}, we obtain:
\begin{equation}
\begin{split}
\var(\mathbf{N}) = & \; \E(\mathbf{L})\var(\mathbf{J}) + \var(\mathbf{L})(\E(\mathbf{J}))^2 \\
= & \; (T_w/T_s-1) \alpha (1-\alpha) \\
& \; + ((T_w/T_s)^2 - T_w/T_s ) \alpha^2
\end{split}
\label{eq:varn}
\end{equation}
Substituting \eqref{eq:varn} for $\var(\mathbf{N})$ and $\alpha(1-\alpha)$ for $\var(\mathbf{I})$ into \eqref{eq:threeterms} then yields an expression that can be evaluated for different values of $T_w/T_s$, $\gamma$, and $\alpha$, as we discuss in the following.
\paragraph{Comparison}

The difference between between \eqref{eq:bitcoinreward3} and \eqref{eq:threeterms} in practice is illustrated in \autoref{fig:variances}. This is done by comparing for a range of different values of $\alpha$ the block rewards' \emph{coefficient of variation}, which is the ratio of the square root of the variance to the mean. 

To empirically validate the results, we have also implemented a simulator in Java that can evaluate Bitcoin as well as \name. We use two nodes, one of which controls a share $\alpha$ of the hash rate, and another controls a share $1-\alpha$. The nodes can broadcast information about blocks, although we abstract away from most of the other network behavior. We do not consider transactions (i.e., we mine empty blocks), and we use a simplified model for the propagation delays: delays are drawn from a Weibull distribution with shape parameter $0.6$ \cite{papagiannaki2003measurement}, although for \autoref{fig:variances} the mean was chosen to be negligible  (more realistic values are chosen for \autoref{tab:latencies}).

\begin{figure}[!t]
    \centering
\includegraphics[width=0.4\textwidth, trim={2cm 0 1.2cm 0}, clip]{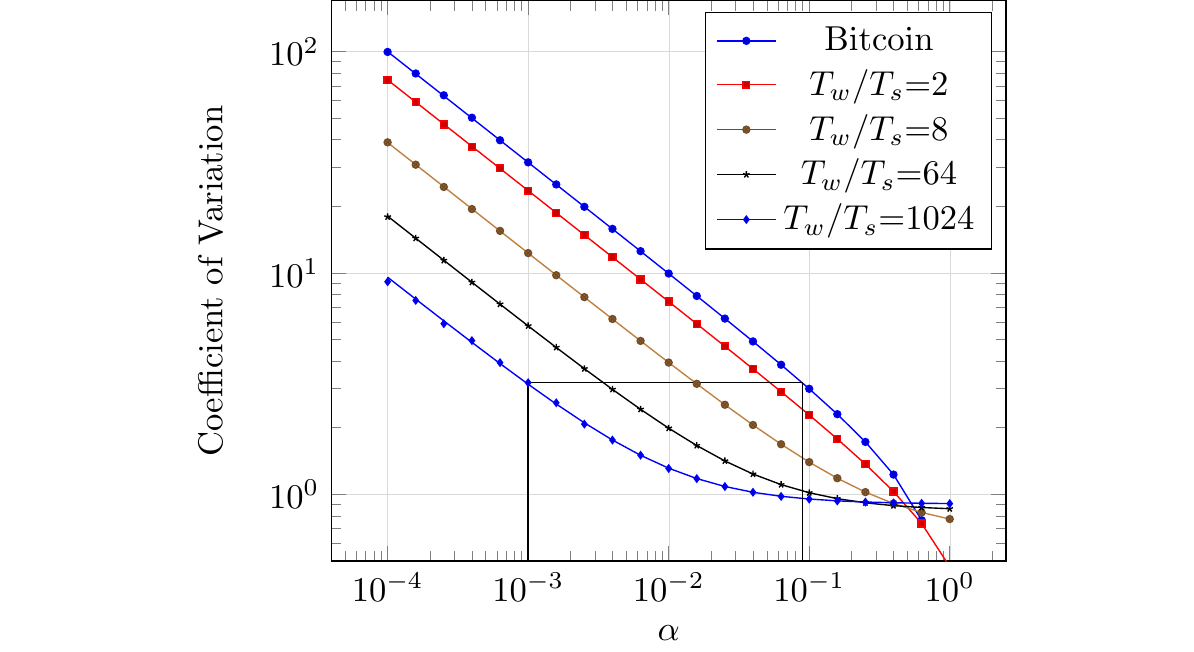}
\caption{\emph{Coefficients of variation} for the total rewards of $\alpha$-strong miners for 
	different strong/weak header difficulty ratios ($T_w/T_s = 1$ corresponds to Bitcoin). The lines indicate the exact results obtained using our analysis, whereas the markers indicate simulation results.
	We used \mbox{$\gamma = \log_2(T_w/T_s)$}. The black lines indicate that for \mbox{$T_w/T_s = 1024$}, a $0.1\%$-strong miner has a coefficient of variation that is comparable to a $9\%$-strong miner's in Bitcoin.
	}
\label{fig:variances}
\end{figure}

The black lines in \autoref{fig:variances} demonstrate that when $T_w/T_s=1024$, 
	a miner with share $0.1\%$ of the mining power has the same coefficient of reward variation as a miner with stake $9\%$ in Bitcoin. 
	Also note that for $T_w/T_s=1024$ and $\alpha \ge 1\%$,
	the coefficient of variation does not substantially decrease 
	anymore, because nearly all of the reward variance is due to the number of weak headers. Hence, there would be fewer reasons for miners in our system 
to join large and cooperative mining pools, which has a positive effect on the decentralization of the system.

%

\subsection{Chain Quality}
\label{sec:analysis:chainq}

One measure for the `quality' of a blockchain is the stale rate of blocks \cite{gervais2016security}, i.e., the percentage of blocks that appear during forks and do not make it onto the main chain. This is closely related to the notion of mining power utilization~\cite{eyal2016bitcoinng}, which is the fraction of mining power used for non-stale blocks.
In \name, the stale rate of strong blocks may increase due to high latency.
After all, while a new block is being propagated through the network, weak headers that strengthen the previous block that are found will be included by miners in their PoW calculation. As a result, some miners may refuse to switch to the new block when it arrives. However, the probability of this happening is very low: because each weak header only contributes $T_s/T_w$ to the difficulty of a block, it would take on average 10 minutes to find enough weak headers to outweigh a block. As we can see in \autoref{tab:latencies}, the effect on the stale rate is negligible even for very high network latencies (i.e., 53 seconds).
We also emphasize that the strong block stale rate is less important in our
setting, as the losing miner still would benefit from her weak headers appended
to the winning block. 

 Regarding the fairness, defined as the ratio between the observed share of the rewards (we simulate using one $10\%$-strong miner and a $90\%$-strong one) and the share of the mining power, we see that \name does slightly worse than Bitcoin for high network latencies. The most likely cause is that due to the delay in the network, the $10\%$-strong miner keeps mining on a chain that has already been extended for longer than necessary. This gives the miner a slight disadvantage compared to the $90\%$-strong miner.

\begin{table*}[!h]
	\centering
    \small
	\begin{tabular}{rcccccccccc}
		\toprule 
		& \multirow{4}*{\textbf{Latency}} & \multirow{4}*{\textbf{Bitcoin}} &  \multicolumn{8}{c}{\textbf{\name}} \\
		\expandafter\cline\expandafter{\expandafter4\string-11\smallskip}
		
		\multicolumn{3}{c}{} & \multicolumn{1}{c}{$\mathbf{T_w/T_s = 2}$} & \multicolumn{3}{c}{$\mathbf{T_w/T_s = 64}$}  & & \multicolumn{3}{c}{$\mathbf{T_w/T_s = 1024}$}  \\%
		\expandafter\cline\expandafter{\expandafter5\string-7}
		\expandafter\cline\expandafter{\expandafter9\string-11\smallskip}
		&  & & $\gamma=1$ & $\gamma = 1$ & $\gamma=7$ & $\gamma=63$ & & $\gamma=1$ & $\gamma=10$ & $\gamma = 1023$ \\
		\midrule


   & low & .0023 & .0025 & .0021 & .0026 & .0028 & & .0023 & .0025 & .0019 \\
   strong stale rate & medium & .0073 & .0082 & .0087 & .0077 & .0078 & & .0084 & .0067 & .0081 \\
   & high & .0243 & .0297 & .0242 & .0263 & .0247 & & .0274 & .0249 & .0263 \\
  \midrule
   & low & --- & .0043 & .0047 & .0049 & .0046 & & .0049 & .0047 & .0047 \\
   weak stale rate & medium & --- & .0142 & .0151 & .0154 & .0149 & & .0145 & .0147 & .0149 \\
   & high & --- & .0400 & .0459 & .0474 & .0452 & & .0469 & .0455 & .0463 \\
  \midrule
   & low & .9966 & .9814 & .9749 & .9747 & .9838 & & .9645 & .9809 & .9812 \\
   fairness & medium & .9276 & .9384 & .9570 & .9360 & .9364 & & .9329 & .9400 & .9385 \\
   & high & .7951 & .7640 & .7978 & .7820 & .7757 & & .7756 & .7766 & .7775 \\
		
		\bottomrule
	\end{tabular}
    \caption{For several different protocols, the strong block stale rate, weak header rate, and the `fairness' for an $\alpha$-strong honest miner with $\alpha=0.1$. Here, fairness is defined as the ratio between the observed share of the reward and the `fair' share of the rewards (i.e, 0.1). 'Low', 'medium', and 'high' latencies refer to the mean of the delay distribution  in the simulator; these are roughly 0.53 seconds, 5.3 seconds, and 53 seconds respectively. The simulations are based on a time period corresponding to roughly 20\,000 blocks.}
\label{tab:latencies}
\end{table*}

\subsection{Security}
\label{sec:analysis:security}

One of the main advantages of \name is the added robustness to selfish mining strategies akin to those discussed in \cite{eyal2014majority} and \cite{sapirshtein2016optimal}. In selfish mining, attackers aim to increase their share of the earned rewards by tricking other nodes into mining on top of a block that is unlikely to make it onto the main chain, thus wasting their mining power. This may come at a short-term cost, as the chance of the attacker's blocks going stale is increased --- however, the difficulty rescale that occurs every 2016 blocks means that if the losses to the honest nodes are structural, the difficulty will go down and the gains of the attacker will increase.

In the following, we will consider the selfish mining strategy of \cite{eyal2014majority},\footnote{The `stubborn mining' strategy of \cite{sapirshtein2016optimal} offers mild improvements over \cite{eyal2014majority} for powerful miners, but the comparison with \name is similar. We have also modeled \name using a Markov decision process, in a way that is similar to the recently proposed framework of~\cite{zhang2019lay}. Due to the state space explosion problem, we could only investigate the protocol with a small number of expected weak headers, but we have not found any strategies noticeably that are better than those presented.} described as follows:
\setdefaultleftmargin{1em}{2em}{}{}{}{}
\begin{compactitem}
\item The attacker does not propagate a newly found block until she finds at least a second block on top of it, and then only if the difference in difficulty between her chain and the strongest known alternative chain is between zero and $R$.
\item The attacker adopts the strongest known alternative chain if its difficulty is at least greater than her own by $R$.
\end{compactitem}
\setdefaultleftmargin{2em}{2em}{}{}{}{}
In Figure~\autoref{fig:selfish_payoffs1}, we have depicted the profitability of this selfish mining strategy for different choices of $T_w/T_s$. As we can see, for $T_w/T_s = 1024$ the probability of being `ahead' after two strong blocks is so low that the strategy only begins to pay off when the attackers' mining power share is close to $45\%$ --- this is an improvement over Bitcoin, where the threshold is closer to $33\%$. 

\begin{figure}[t!]
    \centering
    \subfloat[][]{\includegraphics[width=0.235\textwidth, trim={2cm 0 1.7cm 0}, clip]{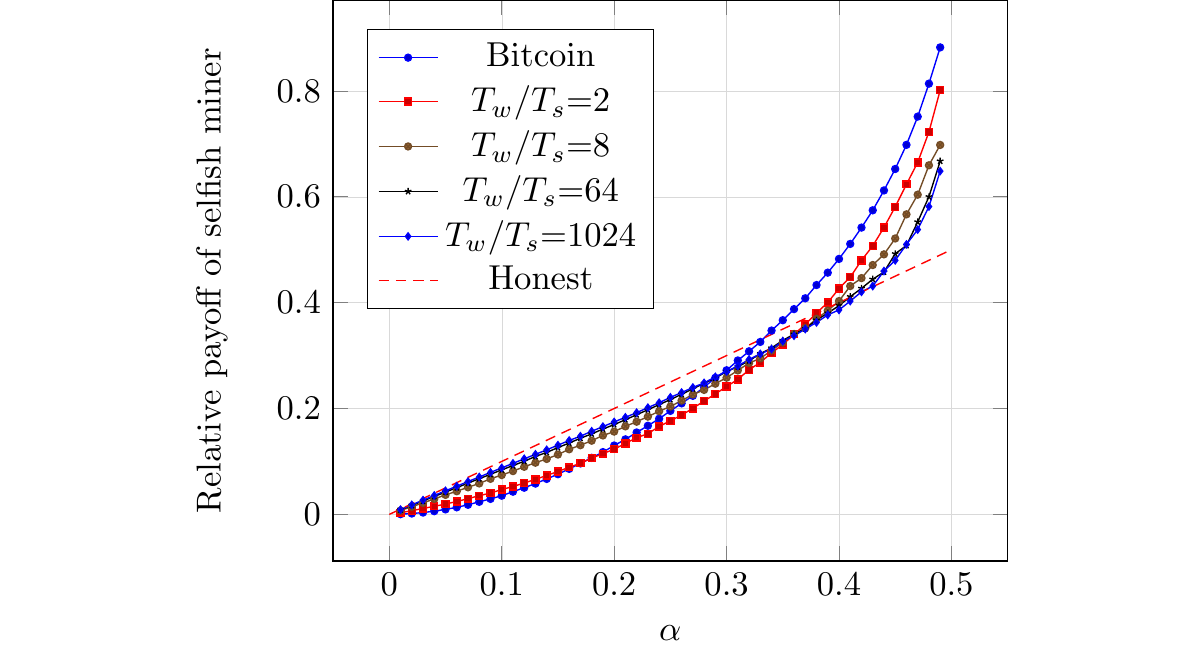}\label{fig:selfish_payoffs1}}
     \subfloat[][]{\includegraphics[width=0.235\textwidth, trim={2cm 0 1.7cm 0}, clip]{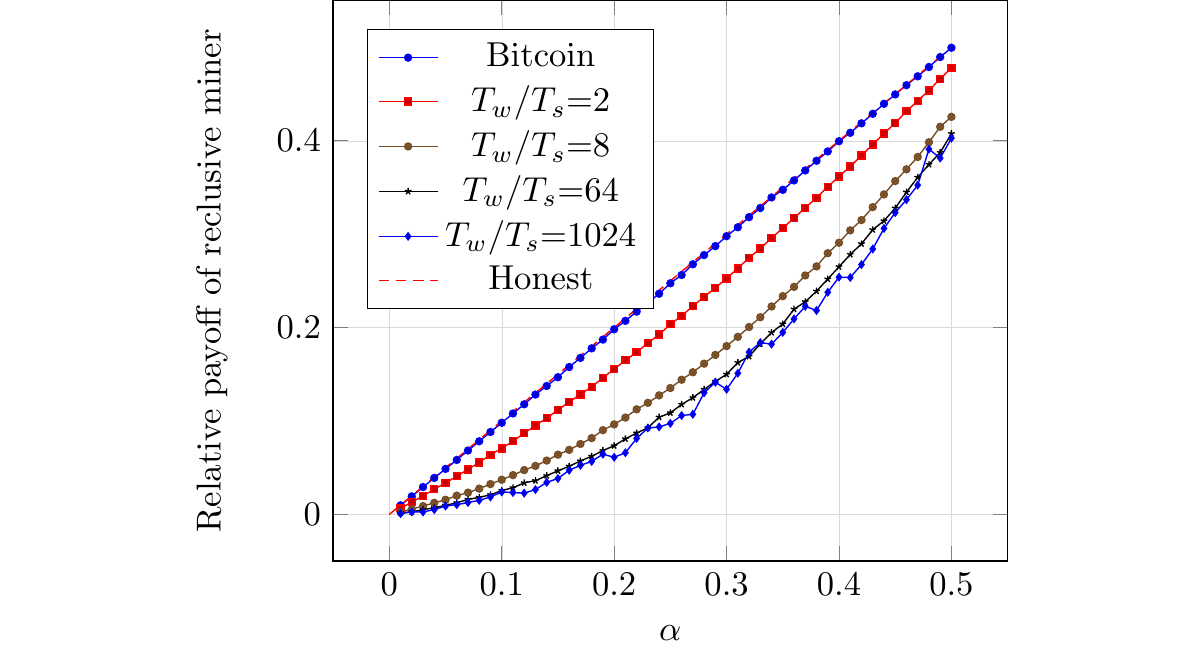}\label{fig:selfish_payoffs2}}\\
         \subfloat[][]{\includegraphics[width=0.235\textwidth, trim={2cm 0 1.7cm 0}, clip]{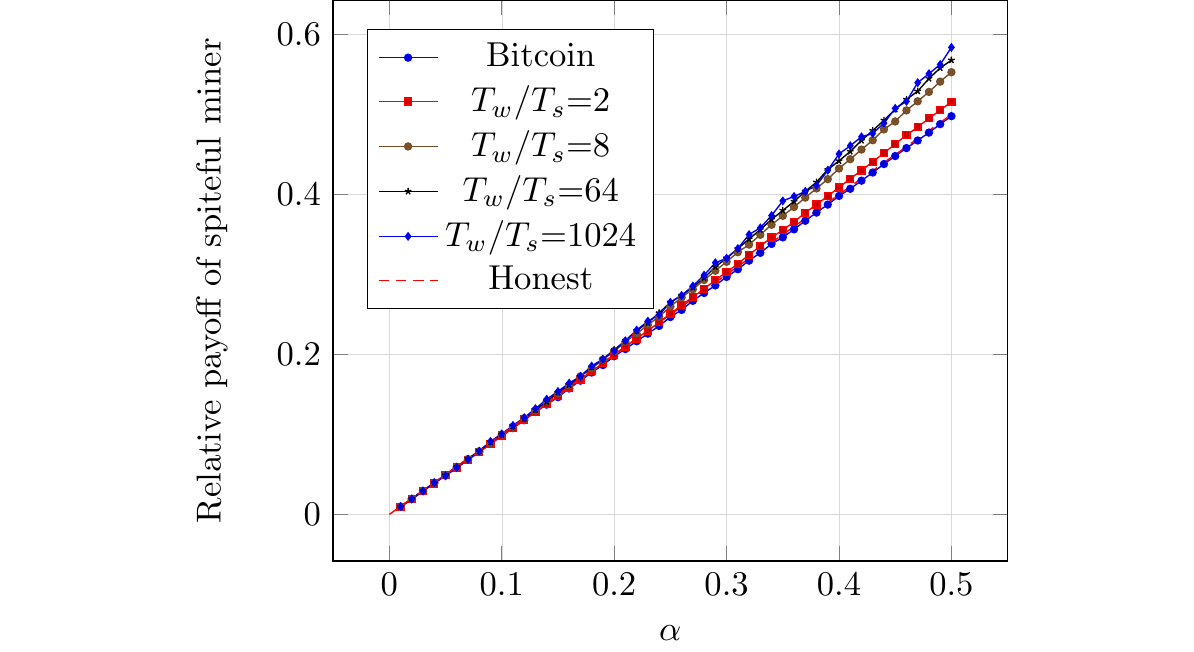}\label{fig:selfish_payoffs3}}
     \subfloat[][]{\includegraphics[width=0.235\textwidth, trim={2cm 0 1.7cm 0}, clip]{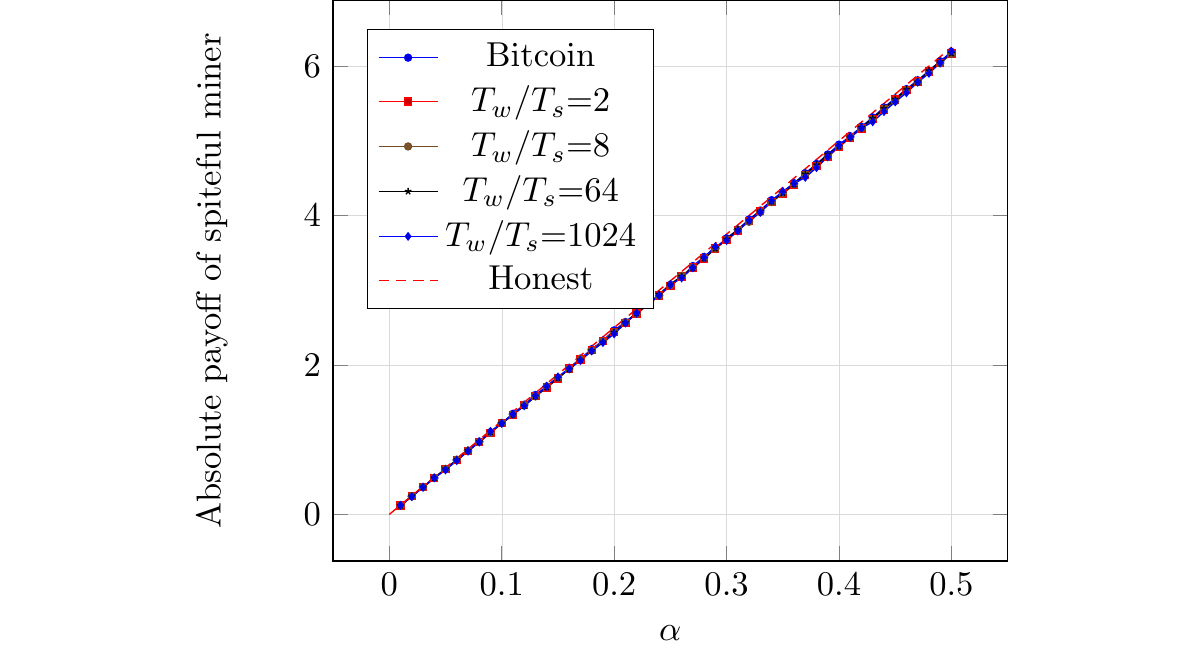}\label{fig:selfish_payoffs4}}
	\caption{Payoffs of an $\alpha$-strong adversarial miner for different strategies. Figure (a): relative payoff of a \emph{selfish} miner following the strategy of \cite{eyal2014majority}, compared to an $(1-\alpha)$-strong honest miner. Figure (b): relative payoff of a \emph{reclusive} miner who does not broadcast her weak blocks. Figure (c): \emph{relative} payoff (with respect to the rewards of all miners combined) of a \emph{spiteful} miner, who does not include other miners' weak blocks unless necessary. Figure (d): \emph{absolute} payoff of a \emph{spiteful} miner, with 12.5 BTC on average awarded per block. We consider Bitcoin and \name with different choices of $T_w/T_s$, with \mbox{$\gamma = \log_2(T_w/T_s)$}.
}
	
\end{figure}

\name does introduce new adversarial strategies based on the mining of new weak headers.  
Some examples include not broadcasting any newly found weak blocks (``reclusive'' mining), refusing to include the weak headers of other miners (``spiteful'' mining), and postponing the publication of a new strong block and wasting the weak headers found by other miners in the meantime. In the former case, the attacker risks losing their weak blocks, whereas in both of the latter two cases, the attacker risks their strong block going stale as other blocks and weak headers are found. Hence, these are not cost-free strategies. Furthermore, because the number of weak headers does not affect the difficulty rescale, the attacker's motive for increasing the stale rate of other miners' weak headers is less obvious (although in the long run, an adversarial miner could push other miners out of the market entirely, thus affecting the difficulty rescale). 

In Figure~\ref{fig:selfish_payoffs2}, we have displayed the relative payout (with respect to the total rewards) of a reclusive $\alpha$-strong miner --- this strategy does not pay for any $\alpha<0.5$. In Figure~\ref{fig:selfish_payoffs3}, we have depicted the relative payoff of a spiteful mine who does not include other miners' weak blocks unless necessary (i.e., unless others' weak blocks together contribute more than $R$ to the difficulty, which would mean that any single block found by the spiteful miner would always go stale). For low latencies (the graphs were generated with an average latency of 0.53 seconds), the strategy is almost risk-free, and the attacker does manage to hurt other miners more than herself, leading to an increased relative payout. However, as displayed in Figure~\ref{fig:selfish_payoffs4}, there are no absolute gains, even mild losses. As mentioned earlier, the weak headers do not affect the difficulty rescale so there is no short-term incentive to engage in this behavior --- additionally there is little gain in computational overhead as the attacker still needs to process her own weak headers. In Section~\ref{sec:parametrization} we will discuss protocol updates that can mitigate these strategies regardless.

\subsection{More Reliable Timestamps}
\label{sec:analysis:ts}
Finally, we conducted a series of simulations to investigate how the introduced
redefinition of timestamps interpretation (see \textit{getTimestamp()} in
\autoref{alg:all} and \autoref{sec:details:timestamps})   influences the
timestamp reliability in an adversarial setting.  We assume that an adversary
wants to deviate blockchain timestamps by as much as possible.  There are two
strategies for such an attack, i.e., an adversary can either ``slow down''
timestamps or ``accelerate'' them.  In the former attack, the best adversary's
strategy is to use the minimum acceptable
timestamp in every header created by the adversary.  Namely, the adversary sets
its timestamps to the median value of the last eleven blocks (a header with a
lower timestamp would not be accepted by the network -- see
\autoref{sec:problem:issues}).  As for the latter attack, the
adversary can analogously bias timestamps towards the future by putting the maximum acceptable value in all her created headers.  The maximum timestamp value accepted by network nodes
is two hours in the future with respect to the nodes' internal clocks (any header with a higher
timestamp would be rejected).

\begin{figure}[t!]
    \centering
    \subfloat[][]{\includegraphics[width=0.255\textwidth, trim={2cm 0 1.7cm 0}, clip]{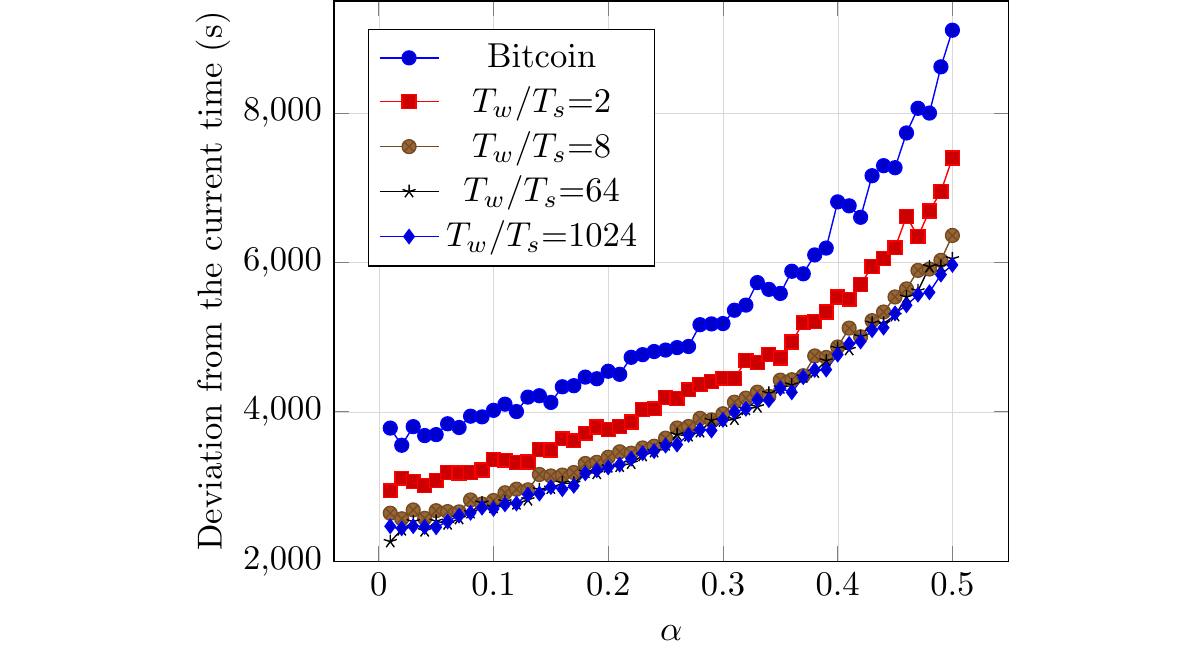}}
    \subfloat[][]{\includegraphics[width=0.215\textwidth, trim={3.3cm 0 1.7cm 0}, clip]{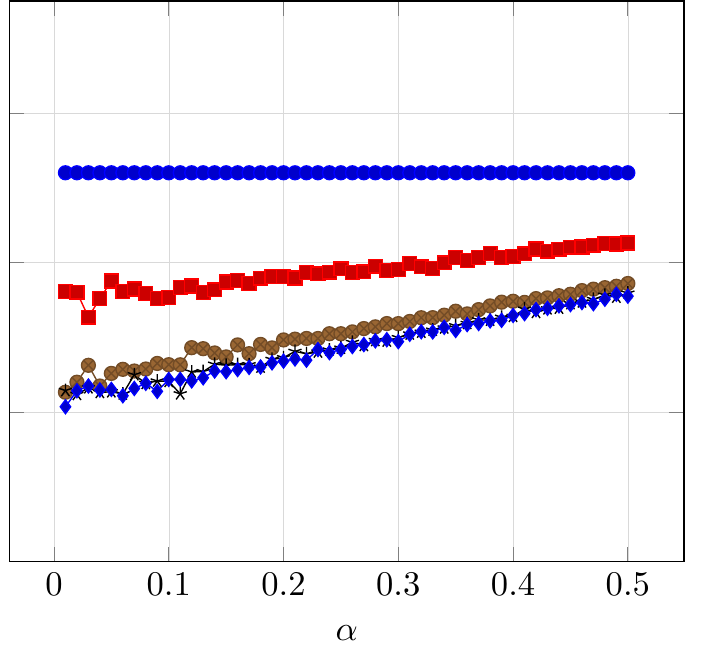}}
	\caption{The deviation from the network time that an $\alpha$-strong adversary 
    can introduce for its mined blocks by slowing (the left graph) and
    accelerating (the right graph) timestamps.}
	\label{fig:timestamps}
\end{figure}
In our study, we assume that honest nodes maintain the network time which the
adversary tries to deviate from.  We consider the worst-case scenario, which is when the
adversary, who also biases all her header timestamps, mines the strong block. We
measure (over 10000 runs) how such a malicious timestamp can be mitigated by our
redefinition of the block timestamps interpretation.  We present the obtained
results in \autoref{fig:timestamps}, and as shown in the slow-down case our
protocol achieves much more precise timestamps than Bitcoin (the difference is
around 2000 seconds).  Similarly, when the adversary accelerates timestamps, our
protocol can mitigate it effectively, adjusting the adversarial timestamps by
2000-3500 seconds towards the correct time.  This effect is achieved due to the
block's timestamp calculation as a weighted average of all block headers.
The adversary could try to remove honest participants' weak headers in order to
give a stronger weight to its malicious timestamps, but in
\autoref{sec:parametrization} we discuss ways to mitigate it.

\section{Discussion}
\label{sec:appendix}
\subsection{Impact of the Parameter Choice}
\label{sec:parametrization}

The results presented in \autoref{sec:analysis} required several parameters to be fixed. First of all, we had to choose $\gamma$, which determines the relative contribution of the weak headers to the total mining rewards. Second, there is the contribution of the weak blocks to the chain difficulty, which in the \mbox{\emph{chainPoW()}} function in \autoref{alg:all} was set to be only $T_{max} / T_w$. This means that the PoW of a weak header relative to a strong block's PoW --- we call this the \textit{difficulty factor} --- is fixed to be $T_{s} / T_{w}$. In the following, we first discuss the relevant trade-offs and then motivate our choice.

When both $\gamma$ and the difficulty factor are low, the impact on the reward variance of the miners (as per \autoref{fig:variances}) will be mild as the strong block rewards still constitute about 50\% of the mining rewards. This reliance on the block rewards also means that `spiteful' mining as discussed in \autoref{sec:analysis:security} is disincentivized as the risk of strong blocks going stale still has a considerable impact on total rewards. However, selfish mining as proposed in \cite{eyal2014majority} relies on several blocks in a row being mined in secret, and even for a low difficulty factor it becomes much harder for the attacker's chain to stay `ahead' of the honest chain, as the latter accumulates strength from the weak headers at a faster rate. Hence, in this setting we only gain protection against selfish mining.

When $\gamma$ is high but the difficulty factor is not (which is the setting of \autoref{sec:analysis}), then in addition to disincentivizing selfish mining, the reward variances become much less dependent on the irregular strong block rewards. This benefits small miners and reduces centralization, as we also discuss in \autoref{sec:centralized_pools}. However, spiteful mining will have more of an impact as the possible downside (i.e., a latency-dependent increase in the strong block stale rate) will have less of an effect on the total rewards. 

When both $\gamma$ and the difficulty factor are high, the impact of spiteful mining is mitigated. The reason is that blocks quickly accumulate enough weak headers to outweigh a strong block, and in this case spiteful miners need to adopt the other weak blocks or risk their strong block becoming stale with certainty. The downside in this setting is that the system-wide block stale rate is increased. For example, if each weak header contributes $\gamma T_s / T_w$ to the difficulty and $\gamma = 10$, then after (on average) one minute enough weak headers are found to outweigh a strong block, and if propagation of the block takes longer than one minute then some miners will not adopt the block, increasing the likelihood of a fork.

In this paper, we have chosen the second of the three approaches --- a moderately high $\gamma$, yet a minor difficulty factor. The reason is that the only downside (spiteful mining) was considered less of a concern than the other downsides (namely a low impact on reward variances and a higher block stale rate respectively) for two reasons: a) because spiteful mining does not lead to clear gains for the attacker, and b) because it only has a large impact on other miners' profits if the attacker controls a large share of the mining power, whereas the emergence of large mining pools is exactly what \name discourages. The specific value of $\gamma =10$ for $T_w/T_s = 1024$ (or $\gamma = \log_2 (T_w/T_s)$ in general) was chosen to sufficiently reduce mining reward variances, yet leaving some incentive to discourage spiteful mining.

The protocol can be further extended to disincentivize spiteful mining, e.g., by
additionally awarding strong block finders a reward that is proportional to the number of weak
headers included. This would make \name more similar to Ethereum, where stale block (`uncle') rewards are paid both to the miner of a stale block and the miner of the successful block that included it (see \autoref{sec:related} for additional discussion of Ethereum's protocol). However, we leave such modifications and their
consequences as future work. 

\subsection{\name and Centralized Mining}
\label{sec:centralized_pools}
Decentralized mining pools aim to reduce variance while providing benefits for
the system (i.e., trust minimization for pools, and a higher number of validating nodes).
However, mining in Bitcoin is in fact dominated by centralized mining pools
whose value proposition, over decentralized pools, is an easy setup and
participation.  Therefore, rational miners motivated by their own benefit,
instead of joining decentralized pools prefer centralized ``plug-and-play''
mining.  It is still debatable whether centralized mining pools are beneficial or
harmful to the system. However, it has been proved multiple times, that the
concentration of significant computing power caused by centralized mining is
risky and should be avoided, as such a strong pool has multiple ways of
misbehaving and becomes a single point of failure in the system. One example is the pool GHash.IO, which in 2014 achieved more than 51\% of the mining power. This undermined trust in the Bitcoin network to the extent that the pool was forced to actively ask miners to join other pools~\cite{ghash}.

In order to follow incentives of rational miners, \name does not require any
radical changes from them and is compatible with
centralized mining pools; however, it is specifically designed to mitigate their
main security risk (i.e., power centralization). In \name such pools could be
much smaller than in Bitcoin (due to minimized variance) and to support this
argument we conducted a study.  We listed the largest Bitcoin mining pools and
their shares in the global mining power (according to
\url{https://www.blockchain.com/en/pools} as for the time of writing). Then for
each pool, we calculated what would be the pool size in \name to offer the miner the same payout variance experience, and the variance reduction factor in that case.  As shown in \autoref{tab:pools}, for the Bitcoin largest mining pool with 18.1\% of the global hash rate, an equivalent pool in \name (to provide miners the same reward experience) could be as small as 0.245\% of the hash rate -- around 74 times smaller.  Even better reduction factors are achieved for smaller pools.  Therefore, our study indicates that \name makes the size of a pool almost an irrelevant factor for miners' benefits (i.e., there is no objective advantage of joining a large pool over a medium or a small one).  Therefore we envision that with \name, centralized mining pools will naturally be much more distributed.

\begin{table}[tb!]
	\centering
	\begin{tabular}{lrrc}
		\toprule 
        \multirow{2}*{Mining Pool}            &  \multicolumn{2}{c}{Pool Size} & Size \\
        &  Bitcoin &   \name & Reduction\\
		\midrule
        BTC.com & 18.1\% & 0.245\%    & 74$\times$ \\
        F2Pool & 14.1\% & 0.172\%     & 82$\times$ \\
        AntPool & 11.7\% & 0.135\%    & 87$\times$ \\
        SlushPool & 9.1\% & 0.099\%   & 92$\times$ \\
        ViaBTC & 7.5\% & 0.079\%      & 95$\times$ \\
        BTC.TOP & 7.1\% & 0.074\%     & 96$\times$ \\
        BitClub & 3.1\% & 0.030\%     & 103$\times$ \\
        DPOOL & 2.6\% & 0.025\%       & 104$\times$ \\
        Bitcoin.com & 1.9\% & 0.018\% & 106$\times$ \\
        BitFury & 1.7\% & 0.016\%     & 106$\times$ \\
		\bottomrule
	\end{tabular}
    \caption{Largest Bitcoin mining pools and the corresponding pool sizes in
    StrongChain offering the same relative reward variance (\mbox{$T_w/T_s=1024$} and
    $\gamma=10$).}
    \label{tab:pools}
\end{table}

\subsubsection*{Limitations}
As discussed, it is beneficial for the system if as many participants as possible independently run full nodes; however, miners
join large centralized pools not only due to high reward variance.
Other potential reasons include
the minimization of operational expenses as running a full node is a large overhead,
higher efficiency since large pools may use high-performance hardware and network,
better ability to earn extra income from merge mining~\cite{narayanan2016bitcoin},
better protection against various attacks, anonymity benefits, etc.
This work focuses on removing the reward variance reason.  Although we believe that \name would
produce a larger number of small pools in a natural way,
it does not eliminate the other reasons, so some large
centralized pools may still remain.  Luckily, our system is orthogonal to multiple concurrent
solutions.  For instance, \name could be easily combined with non-outsourceable
puzzle schemes (see \autoref{sec:related}) to increase the number of full nodes
by explicitly disincentivizing miners from outsourcing their computing power. We
leave such a combination as interesting future work.

\section{Realization in Practice}
\label{sec:implementation}
We implemented our system in order to investigate its feasibility and confirm
the stated properties.  We implemented a \name full node with interactive client in
Python, and our implementation includes the complete logic from \autoref{alg:all}
and all functionalities required to have a fully operational system
(communication modules, message types, validation logic, etc...).\footnote{Our
implementation is available at
\url{https://github.com/ivan-homoliak-sutd/strongchain-demo/}.}
As described
before, the main changes in our implementation to the Bitcoin's block layout
are:
\begin{compactitem}  
    \item a new (20B-long) \textit{Coinbase} header field, 
    \item a new binding transaction protecting all weak headers of the block, 
    \item removed original coinbase transaction,
\end{compactitem}  
where a binding transaction has a single (32B-long) output as presented in
\autoref{eq:weak_hdr_hash}.\footnote{An alternative choice is to store a hash of
weak headers in a header itself. Although simpler, that option would incur a
higher overhead if the number of weak headers is greater than several.}

Weak headers introduced by our system  impact the bandwidth and storage overhead
(when compared with Bitcoin).  Due to compressing them
(see \autoref{sec:details:block:layout_valid}), the size of a single weak header in a
block is 60B.  For example, with an average number of weak headers equal 1024,
the storage and bandwidth overhead increases by about 61.5KB per block (e.g.,
with 64 weak headers, the overhead is only 3.8KB).  Taking into account the
average Bitcoin block size of about 1MB (the average between 15 Oct and 15 Nov
2018\footnote{\url{https://www.blockchain.com/en/charts/avg-block-size}}), 1024
weak headers constitute around 6.1\% of today's blocks, while 64 headers only
0.4\%.  The same overhead is introduced to SPV clients, that besides a strong
header need to obtain weak headers and a proof for their corresponding binding
transaction. Thus, an SPV update (every 10 minutes) would be 61.5KB or 3.8KB on
average for 1024 or 64 weak headers, respectively. However, since only strong
headers authenticate transactions, SPV clients do not need to store weak headers
and after they are validated, they can remove them (they need to just calculate
and associate their aggregated PoW with the strong header).  Such an approach
would not introduce any noticeable storage overhead on SPV clients.

Nodes validate all incoming weak headers; however, this overhead is a single hash
computation and simple sanity checks per header. Even with our unoptimized
implementation running on a commodity PC the total validation of a single weak header takes
around 50$\mu$s on average (i.e., 51ms per 1024 headers on a single core). 
Given that we do not believe this overhead can lead to more serious
denial-of-service attacks than ones already known and existing in Bitcoin (e.g.,
spamming with large invalid blocks).  Additionally, \name can adopt prevention
techniques present in Bitcoin, like blacklisting misbehaving peers.


\section{Related work}
\label{sec:related}
Employing weak solutions (and their variations) in Bitcoin is an idea~\cite{wBlockgoodandbad,wBlockSimulator}
circulating on Bitcoin forums for many years.  Initial proposals leverage weak
solutions (i.e., \textit{weak blocks}) for faster transaction
confirmations~\cite{PT1,nearBlockFirst}, for signaling the current working
branch of particular miners~\cite{GA1,GA2,TN1}.  Unfortunately, most of these
proposals come without necessary details or lack rigorous analysis.  Below, we
discuss the most related attempts that have been made to utilize weak or stale blocks in
PoW-based decentralized consensus protocols. 
We compare these systems in \autoref{tab:comparison} according to their reward
and PoW calculation schemes.

\myparagraph{Subchains.}
Rizun proposes Subchains~\cite{rizun2016subchains}, where a
chain of weak blocks (a so-called subchain) bridging each
pair of subsequent strong blocks is created. The design of Subchain puts a special focus on
increasing the transaction throughput and the double-spend security for 
unconfirmed transactions. Rizun argues that since the (weak) block interval of subchains
is much smaller than the strong block interval, it allows for faster (weak) transaction
confirmations. Another claimed advantage of such an approach is that during the process of building subchains, the miners can
detect forks earlier, and take actions accordingly to avoid wasting computational power. 
However, the design of Subchain sidesteps a concrete security analysis
at the subchain level.
In detail, by using a chaining data structure where
one weak header referencing the previous weak header in a subchain, 
it introduces high stale rate on a subchain.
More importantly, due to applying a Bitcoin-like
subchain selection policy in case of conflicts,
this approach is vulnerable to the selfish mining attack launched on a
subchain.

\begin{table*}[t!]
	\centering
    \small
	\begin{tabular}{rccccc}
		\toprule 
		~ & Bitcoin v0.1 & Bitcoin  & Fruitchains & Flux  &\name \\
		\midrule
		\rowcolor{black!20} Reward (strong) & $R+F$ & $R+F$ & 0 & $(R+F)/(E+1)$ & $cR + F$\\
		Reward (weak) & 0 & 0 &   $(R+F)/E$ & $(R+F)/(E+1)$ & $c\gamma R T_s / T_{w} $ \\
		\rowcolor{black!20} Chain weight contrib. (strong) & 1 & $T_{max}/T_s$ & $T_{max}/T_s$ & $T_{max}/T_s$ &  $T_{max}/T_s$ \\
		Chain weight contrib. (weak) &  0 & 0 & 0 & 0 & $T_{max}/T_w$\\
		\bottomrule
	\end{tabular}
    \caption{The comparison of reward and PoW computation schemes of \name and the related
    systems. ($F$: block transaction fees, $E$: expected number of weak headers
    per block. The entries for Flux are approximations for the PPLNS scheme in
    P2Pool, on which it is based.)}
    \label{tab:comparison}
\end{table*}

\myparagraph{Flux.}
Based on similar ideas as Subchain, Zamyatin et al. propose Flux~\cite{zamyatinflux}.
In contrast to Subchain, Flux shares rewards (from newly minted coins and transaction fees)
evenly among the finders of weak and strong blocks according to the computational resources
they invested. This approach reduces the reward variance of miners, and therefore mitigates the need
for large mining pools, which is beneficial for the system's decentralization. 
In addition, simulation experiments show that Flux renders selfish mining on the main chain less profitable.
However, alike Subchains, Flux employs a chain structure for weak blocks, which
inevitably introduces race conditions, increasing the stale rate of weak blocks and making it 
more susceptible to selfish mining attacks at the subchain level.
The designers of Flux let both of these issues open and 
discuss the potential application of GHOST~\cite{2013-Sompolinsky} to subchains. 
Another limitation of this work is that the authors do not analyze 
the requirements on space consumption when putting possibly 
a high number of overlapping transactions into Flux subchains, 
which could negatively influence 
network, storage, and processing resources.

\noindent\emph{Remarks on Subchain and Flux.~}
One important difference between our approach and
the above two designs is that we adopt a flat hierarchy for the weak
blocks, which not only eliminates the possibility of selfish mining in a set of weak solutions,
but also avoids the issue of stale rate of weak blocks.
In contrast, both Subchain and Flux employ a chain structure for weak blocks, 
inevitably making them more susceptible to selfish mining 
attacks at the subchain level.
Moreover, in our approach rewards are not shared, therefore miners
can only benefit from appending received weak solutions.
In addition, none of Subchain and Flux provide a concrete
implementation demonstrating their applicability.

\myparagraph{FruitChains.}
Another approach to address the mining variance and selfish mining issues
is the FruitChains protocol proposed by Pass and Shi~\cite{pass2017fruitchains}.
In FruitChains, instead of directly storing the records inside blocks, the
records or transactions are put inside ``fruits'' with relatively low mining difficulties. Fruits then are appended to a blockchain via blocks
which are mined with a higher difficulty. Mined fruits and blocks yield rewards, hence, miners can be paid more often and there is no need 
to form a mining pool.

However, some practical and technical details of FruitChains lead to undesired
side effects.  First, the scheme allows fruits with small difficulties to be
announced and accepted by other miners. With too small difficulty it could
render high transmission overheads or even potential denial-of-service attacks
and its effects on the network are not investigated. On the other hand, too high
fruit difficulty could result in a low transaction throughput and a high reward
variance. Second, duplicate fruits are discarded, even though they might be
found by different miners -- this naturally implies some stale fruit rate
(uninvestigated in the paper).  Similarly, it is unclear would a block finder
have an incentive to treat all fruits equally and to not prioritize her mined
fruits (especially when fruits are associated with transaction fees).  Moreover,
fruits that are not appended to the blockchain quickly enough have to be mined
and broadcast again, rendering additional overheads.  Finally, the description
of FruitChains lacks important details (e.g., the size of the fruits or the
overheads introduced) as well as an actual implementation.

\myparagraph{GHOST and Ethereum.}
An alternative approach for decreasing a high reward variance of miners is to shorten the block creation rate to the extent that does not
hurt the overall system security -- such an approach increases transaction throughput as well.
The Greedy Heaviest-Observed Sub-Tree (GHOST) chain selection rule~\cite{2013-Sompolinsky}
makes use of stale blocks to increase the weight of their ancestors, which
achieves a 600 fold speedup for the block generation compared to Bitcoin, 
while preserving its security strength.
Despite the inclusion of stale blocks in the blockchain,
only the miners of the main chain get rewards for the inclusion of the stale blocks.

In contrast to the original GHOST, 
the white and yellow papers of Ethereum~\cite{ethereum-white-paper,wood2014ethereum} 
propose to reward also miners of stale blocks in order to further increase the security -- 
not wasting with the consumed resources for mining of stale blocks. 
However, Ritz and Zugenmaier shows that rewarding so called ``uncle blocks'' lowers
the threshold at which selfish mining is profitable~\cite{ritz2018impact} --
a selfish miner can built-up the ``heaviest'' chain, 
as she can reference blocks previously not broadcast to the honest network.
Likewise, the inclusive blockchain protocol~\cite{lewenberg2015inclusive}, which
increases the transaction throughput,
but leaves the selfish mining issue unsolved.

\myparagraph{DAG-based Protocols.}
SPECTRE~\cite{sompolinsky2016spectre} is an example of the protocols family that
leverages a directed acyclic graph (DAG). This family proposed more radical
design changes motivated by the observation that one essential throughput
limitation of Nakamoto consensus is the data structure leveraged which can
be maintained only sequentially.  SPECTRE generalizes the Nakamoto's blockchain
into a DAG of blocks, while allowing miners to add blocks concurrently with a
high frequency, just basing on their individual current views of the DAG.  Such
a design provides multiple advantages over chain-based protocols including
\name.  Frequently published blocks increase transaction throughput and provide
fast confirmation times while relaxed consistency requirements allow to tolerate
propagation delays.  Like \name, SPECTRE also aims to decrease mining reward
variance, but achieves it again via frequent blocks.  However, frequent blocks
have a side effect of transaction redundancy which negatively impacts the storage
and transmission overheads, and this aspect was not investigated.  Moreover,
SPECTRE provides a weaker property than chain-based consensus protocols as
simultaneously added transactions cannot be ordered.  This and schemes following
a similar design are payments oriented and to support order-specific
applications, like smart contracts, they need to be enhanced with an additional
ordering logic.

More recently, Sompolinsky and Zohar~\cite{cryptoeprint:2018:104} proposed two
DAG-based protocols improving the prior work.  PHANTOM introduces and uses a
greedy algorithm (called the GHOSTDAG protocol) to determine the order of
transactions. This eliminates the applicability issues of SPECTRE, but for the
cost of slowing down transaction confirmation times.  Combining advantages of
PHANTOM and SPECTRE into a full system was left by the authors as a future work.

\myparagraph{Decentralization-oriented Schemes.}
Mining decentralization was a goal of multiple previous proposals.  One
direction is to design mining such that miners do
not have incentive to outsource resources or forming coalitions.  Permacoin~\cite{miller2014permacoin}
is an early attempt to achieve it where miners instead of proving their work
prove that their store (fragments of) a globally-agreed file.  Permacoin is
designed such that: a) payment private keys are bound to puzzle solutions
-- outsourcing private keys is risky for miners, b) sequential and random storage
access is critical for the mining efficiency, thus it disincentives miners from
outsourcing data.  If the file is valuable, then a side-effect of Permacoin is
its usefulness, as miners replicate the file. 

The notion of non-outsourceable mining was further extended and other schemes
were proposed~\cite{miller2015nonoutsourceable,zeng2017nonoutsourceable}.
Miller et al.~\cite{miller2015nonoutsourceable} introduces ``strongly
non-outsourceable puzzles'' that aim to disincentivize pool creation by
requiring all pool participants to remain honest. In short, with these puzzles
any pool participant can steal the pool reward without revealing its identity.
The scheme relies on zero knowledge proofs requiring a trusted setup and
introducing significant computational overheads.  The scheme is orthogonal to
\name and could be integrated with easily integrated with \name, however, after
a few years of their introduction no
system of this class was actually deployed at scale; thus, we do not have any empirical
results confirming their promised benefits.

SmartPool is a different approach that was proposed by Luu et al.~\cite{203704}.
In SmartPool, the functionality of mining pools was implemented as a smart
contract.  Such an approach runs natively only on smart-contract platforms but it
allows to eliminate actual mining pools and their managers (note that SmartPool
still imposes fees for running smart contracts), while
preserving most benefits of pool mining.

\myparagraph{Rewarding Schemes for Mining Pools.}
Mining pools divide the block reward (together with the transaction fees) in such a way
that each miner joining the pool is paid his fair share in proportion to his
contribution.  Typically, the contribution of an individual miner in the pool is
witnessed by showing weak solutions called {\it shares}. 

There are various rewarding schemes that mining pools employ.  The
simplest and most natural method is the proportional scheme where the reward of
a strong block is divided in proportion to the numbers of shares
submitted by the miners. However, this scheme leads to pool hopping
attacks~\cite{poolhop}. To avoid this security issue, many other rewarding
systems are developed, including the Pay-per-last-N-shares (PPLNS) scheme and
its variants.  We refer the reader to \cite{rosenfeld2011analysis} for a
systematic analysis of different pool rewarding systems.
 
The reward mechanisms in \name can be seen conceptually as a mining pool
built-in into the protocol.  However, there are important differences between our
design and the mining pools.  The most contrasting one is that in \name
rewarding is not a zero-sum game and miners do not share rewards.  In mining
pools, all rewards are shared and this characteristic  causes multiple in- and
cross-pool attacks that cannot be launched against our scheme.  Furthermore, the
miner collaboration within Bitcoin mining pools is a ``necessary evil'', while
in \name the collaboration is beneficial for miners and the overall system.
We discuss \name and mining pools further in \autoref{sec:centralized_pools}.

\section{Conclusions}
\label{sec:conclusions}
In this paper, we proposed a transparent and collaborative proof-of-work
protocol.  Our approach is based on Nakamoto consensus and Bitcoin,
however, we modified their core designs.  In particular, in contrast to them,
we take advantage of weak solutions, which although they do not finalize a block
creation positively contribute to the blockchain properties.  We also
proposed a rewarding scheme such that miners can benefit from exchanging and
appending weak solutions. These modifications lead to a more secure, fair, and
efficient system. Surprisingly, we show that our approach helps with
seemingly unrelated issues like the freshness property.  Finally, our
implementation indicates the efficiency and deployability of our approach.

Incentives-oriented analysis of consensus protocols is a
relatively new topic and in the future we would like to extend our work by
modeling our protocol with novel frameworks and tools.
Another topic worth investigating in future is how to combine \name with systems
solving other drawbacks of Nakamoto
consensus~\cite{eyal2016bitcoinng,kogias2016enhancing,luu2016secure}, or how to mimic the protocol in the proof-of-stake setting.

\section*{Acknowledgment}
We thank the anonymous reviewers and our shepherd Joseph Bonneau for their
valuable comments and suggestions.

This work was supported in part by the National Research Foundation (NRF), Prime
Minister's Office, Singapore, under its National Cybersecurity R\&D Programme
(Award No. NRF2016NCR-NCR002-028) and administered by the National Cybersecurity
R\&D Directorate, and by ST Electronics and NRF under Corporate Laboratory @
University Scheme (Programme Title: STEE Infosec - SUTD Corporate Laboratory).

{\normalsize \bibliographystyle{abbrv}
\bibliography{ref}}


\end{document}